\def\cO{\mathcal{O}}
\def\cC{\mathcal{C}}
\def\cH{\mathcal{H}}
\def\cE{\mathcal{E}}
\def\cT{\mathcal{T}}
\def\cF{\mathcal{F}}
\def\cX{\mathcal{X}}
\def\cZ{\mathcal{Z}}
\newcommand{\inc}[1]{\ensuremath{\widetilde{N}(#1)}\xspace}
\newcommand*\patchAmsMathEnvironmentForLineno[1]{%
  \expandafter\let\csname old#1\expandafter\endcsname\csname #1\endcsname
  \expandafter\let\csname oldend#1\expandafter\endcsname\csname end#1\endcsname
  \renewenvironment{#1}%
     {\linenomath\csname old#1\endcsname}%
     {\csname oldend#1\endcsname\endlinenomath}}%
\newcommand*\patchBothAmsMathEnvironmentsForLineno[1]{%
  \patchAmsMathEnvironmentForLineno{#1}%
  \patchAmsMathEnvironmentForLineno{#1*}}%
\def\dom{{\sc Dom-Enum}\xspace}
\def\transhyp{{\sc Trans-Enum}\xspace}
\def\np{{\sc NP}\xspace}
\newcommand{\eg}{\emph{e.g.}\xspace}
\newcommand{\ie}{\emph{i.e.}\xspace}
\newcommand{\macro}[3]{\newcommand{#1}[#3]{#2}}
\macro{\size}{\|#1\|}{1}
\newtheorem{thm}{Theorem}
\newtheorem{prop}[thm]{Proposition}
\newtheorem{theorem}[thm]{Theorem}
\newtheorem{lemma}[thm]{Lemma}
\newtheorem{proposition}[thm]{Proposition}
\newtheorem{definition}[thm]{Definition}
\title{Polynomial Delay Algorithm for Listing Minimal Edge Dominating sets in Graphs}
\author[M.M. Kant{\'e} \and V. Limouzy\and A. Mary \and L. Nourine \and T. Uno]{Mamadou Moustapha Kant{\'e} \and Vincent Limouzy \and Arnaud
  Mary \and Lhouari Nourine \and Takeaki Uno}
\address{Clermont-Universit{\'e}, Universit{\'e} Blaise Pascal, LIMOS,
  CNRS, France} 
\email{\{mamadou.kante,limouzy,mary,nourine\}@isima.fr}
\address{National Institute of Informatics, Japan}
\email{uno@nii.jp}
\thanks{M.M. Kant{\'e} and V. Limouzy are supported by the French Agency
  for Research under the DORSO project.}
\begin{document}

\begin{abstract}
  A \emph{hypergraph} is a pair $(V,\cE)$ where $V$ is a finite set and $\cE\subseteq 2^V$ is called the set of hyper-edges. An output-polynomial algorithm for $\cC\subseteq 2^V$ is an algorithm that
  lists without repetitions all the elements of $\cC$ in time polynomial in the sum of the size of $\cH$ and the accumulated size of all the elements in $\cC$. Whether there exists an
  output-polynomial algorithm to list all the inclusion-wise minimal hitting sets of hyper-edges of a given hypergraph (the \transhyp problem) is a fifty years old open problem, and up to now
  there are few tractable examples of hypergraph classes.  An inclusion-wise minimal hitting set of the closed neighborhoods of a graph is called a minimal dominating set. A closed neighborhood of a
  vertex is the set composed of the vertex itself with all its neighbors. It is known that there exists an output-polynomial algorithm for the set of minimal dominating sets in graphs if and only if
  there is one for the minimal hitting sets in hypergraphs.  Hoping this equivalence can help to get new insights in the \transhyp problem, it is natural to look at graph classes.  It was proved
  independently and with different techniques in [Golovach et al. - ICALP 2013] and [Kant\'e et al. - ISAAC 2012] that there exists an incremental output-polynomial algorithm for the set of minimal
  edge dominating sets in graphs (\ie minimal dominating sets in line graphs).  We provide the first polynomial delay and polynomial space algorithm that lists all the minimal edge dominating sets in
  graphs, answering an open problem of [Golovach et al. - ICALP 2013]. Besides the result, we hope the used techniques that are a mix of a modification of the well-known Berge's algorithm and a strong
  use of the structure of line graphs, are of great interest and could be used to get new output-polynomial algorithms.
\end{abstract}

\maketitle

\section{Introduction}\label{sec:1}

The \textsc{Minimum Dominating Set} problem is a classic and well-studied graph optimization problem. A \emph{dominating set} in a graph $G$ is a subset $D$ of its set of vertices such that each
vertex is either in $D$ or has a neighbor in $D$. Computing a minimum dominating set has numerous applications in many areas, \eg, networks, graph theory (see for instance the book \cite{HHS98}). The
\textsc{Minimum Edge Dominating Set} problem is a classic well-studied variant of the \textsc{Minimum Dominating Set} problem \cite{HHS98}. An edge dominating set is a subset $F$ of the edge set such
that each edge is in $F$ or is adjacent to an edge in $F$. In this paper, we are interested in an output-polynomial algorithm for listing without duplications the (inclusion-wise) minimal edge
dominating sets of a graph. An output-polynomial algorithm is an algorithm whose running time is bounded by a polynomial depending on the sum of the sizes of the input and output. The enumeration of
minimal or maximal subsets of vertices satisfying some property in a (hyper)graph is a central area in graph algorithms and for several properties output-polynomial algorithms have been proposed
\eg \cite{AvisF96,BorosEG06,EGM03,FominHKPV14,lawler80,SchwikowskiS02,Tarjan73}, while for others it was proven that no output-polynomial time algorithm exists unless P=NP
\cite{KhachiyanBBEG08b,KhachiyanBBEGM08a,KhachiyanBEG08,lawler80,Strozecki10}.

The existence of an output-polynomial algorithm for the enumeration of minimal dominating sets of graphs (\dom problem) is a widely open question and is closely related to the well-known \transhyp
problem in hypergraphs which asks for an output-polynomial algorithm for the enumeration of all minimal transversals in hypergraphs. A \emph{transversal} (or a \emph{hitting-set}) in a hypergraph is a
subset of its vertex set that intersects every of its hyper-edges. This is a long-standing open problem (see for instance \cite{EG95}) and is well-studied due to its applications in several areas
\cite{EG95,EGM03,GunopulosKMT97,BEKG06,nourine12}. Up to now only few tractable cases are known (see \cite{KLMN11} for some examples). It is easy to see that the minimal dominating sets of a graph are
the minimal transversals of its \emph{closed neighbourhoods}\footnote{The \emph{closed neighborhood} of a vertex $v$ in a graph is the set containing $v$ and all its neighbors.}, and then, as a
particular case, it seems that the \dom problem is more tractable than the \transhyp problem, but some of the authors have very recently proved in \cite{KLMN13} that there exists an output-polynomial
algorithm for \dom if and only if there exists one for \transhyp. It is therefore interesting to look at the \transhyp problem in the perspective of graph theory and to investigate graph classes where
the \dom problem is tractable. Indeed, the authors have exhibited several new tractable cases split graphs \cite{KLMN11}, undirected path graphs \cite{KLMN12}, permutation graphs \cite{KanteLMNU13}
and chordal $P6$-free graphs \cite{KLMN11}.  It is proved in \cite{KLMN12} (and independently in \cite{GolovachHKV13}) that the enumeration of minimal edge dominating sets can be done in (incremental)
output-polynomial. In this paper we improve drastically these two algorithms and give the first algorithm with polynomial delay (the time between two consecutive outputs is bounded by a polynomial on
the input size) and that uses polynomial space.

The strategy of our algorithm is based on a modified version of Berge's algorithm \cite{Berge1989} for enumerating minimal transversals. Berge's algorithm consists in ordering the hyper-edges
$E_1,\ldots,E_m$ of a given hypergraph $\cH$ and for each $i$ computes the minimal transversals of the hypergraph $\cH_i$ with set of hyper-edges $\{E_1,\ldots,E_i\}$ from the minimal transversals of
$\cH_{i-1}$.  The main drawback of Berge's algorithm is that a minimal transversal of $\cH_i$ is not necessarily a subset of a minimal transversal of $\cH$, and therefore Berge's algorithm does not
guarantee an output-polynomial algorithm.  Indeed, it is proved in \cite{Takata07} that there exist hypergraphs for which Berge's algorithm is not output-polynomial for any ordering.  We will propose
a new version of Berge's algorithm that consists in grouping the hyper-edges of a given hypergraph with respect to the hyper-edges of a \emph{maximal matching}\footnote{A \emph{maximal matching} in a
  hypergraph is a set of disjoint edges maximal with respect to inclusion.}. This overcomes the inconvenient of Berge's algorithm, but the counterpart is the difficulty in the extension of the already
computed sub-transversals (which was trivial in Berge's algorithm). We will show how to solve this extension part in the case of minimal edge dominating sets.  It is worth noticing that a strategy
similar to the Berge's algorithm exists for enumerating maximal subsets satisfying a certain property \cite{lawler80}. It consists in ordering the vertices $x_1,\ldots,x_n$ of a hypergraph and for
each $i$ computes the maximal sets that are maximal within $\{1,\ldots,i\}$. But contrary to Berge's algorithm every sub-solution computed at step $i$ is a subset of a solution, and therefore as in
our version of Berge's algorithm if one can compute the maximal sets that are maximal within $\{1,\ldots,i\}$ from those that are maximal within $\{1,\ldots,i-1\}$ one gets an output-polynomial
algorithm.  This strategy has been used by Lawler et al. in \cite{lawler80} to give output-polynomial algorithms for enumerating maximal set packings, maximal complete k-partite subgraphs, the bases
of the intersection of $m$ matroids for fixed $m$, \ldots

% and the algorithm in \cite{lawler80} for generating maximal independent sets. \textbf{Je doute que notre stratégie est basée sur celle de Lawler. Lawler prend un
%   ordre quelconque des sommets, à moins que celle de Lawler et celle de Berge soient les mêmes.}

%%%%%%%%%%%%%%%%%%%%%%%%%%%%%%%%%%%%%%%%%%%%%%%%%%%%%%%%%%%%%%%%%%%%%%%%%%%

The remainder of the text is as follows.  In Section \ref{sec:2} we give some necessary definitions.  Section \ref{sec:3} explains the strategy for the enumeration and give an informal description of
the algorithm.  Section \ref{sec:4} is devoted to the technical parts of the paper.  We sum up all the details of the algorithm in Section \ref{sec:6}, and conclude with some open questions in Section
\ref{sec:7}.

%%%%%%%%%%%%%%%%%%%%%%%%%%%%%%%%%%%%%%%%%%%%%%%%%%%%%%%%%%%%%%%%%%%%%%%%%%%
\section{Preliminaries}\label{sec:2}

If $A$ and $B$ are two sets, $A\backslash B$ denotes the set $\{x\in A\mid x\notin B\}$.  The power-set of a set $V$ is denoted by $2^V$.  For three sets $A,B$ and $C$, we write $A=B\sqcup C$ if
$A=B\cup C$ and $B\cap C=\emptyset$.  The size of a set $A$ is denoted by $|A|$.  We refer to \cite{BondyM08} for graph terminology.  A graph $G=(V,E)$ is a pair of vertex set $V$ and edge set
$E\subseteq V\times V$.  We only deal with finite and simple graphs.  An edge between $x$ and $y$ in a graph is denoted by $xy$ (equivalently $yx$) and sometimes it will be convenient to see an edge
$xy$ as the set $\{x,y\}$, but this will be clear from the context.

Let $G:=(V, E)$ be a graph.  For a vertex $x$, we denote by $\inc{x}$ the set of edges incident to $x$, and $N(x)$ denotes the set of vertices adjacent to $x$.  For every edge $e:=xy$, we denote by
$N[e]$ the set of edges adjacent to $e$, \ie $N[e]:=\inc{x}\cup \inc{y}$.  A subset $D$ of $E$ is called an \emph{edge dominating set} if for every edge $e$ of $G$, we have $N[e]\cap D\ne \emptyset$,
and $D$ is \emph{minimal} if no proper subset of it is an edge dominating set.

If $V$ is a finite set and $\cC\subseteq 2^V$, then the size of $\cC$, denoted by $\size{\cC}$ is $\sum_{C\in \cC} |C|$. A \emph{hypergraph} $\mathcal{H}$ is a pair $(V,\cF)$ with a vertex set $V$ and
a hyper-edge set $\cF\subseteq 2^V$.  It is worth noticing that graphs are special cases of hypergraphs. The size of a hypergraph $\cH$, denoted by $\size{\cH}$, is defined as $|V|+\size{\cF}$.

Let $\cH$ be a hypergraph and let $\cC$ be a subset of $2^{V}$. An \emph{output-polynomial} algorithm for $\cC$ is an algorithm that lists the elements of $\cC$ without repetitions in time
$\cO\left(p\left(\size{\cH}, \size{\cC}\right)\right)$ for some polynomial $p$.  We say that an algorithm enumerates $\cC$ with \emph{polynomial delay} if, after a pre-processing that runs in time
$\cO(p(\size{\cH}))$ for some polynomial $p$, the algorithm outputs the elements of $\cC$ without repetitions, the delay between two consecutive outputs being bounded by $\cO(q(\size{\cH}))$ for some
polynomial $q$ (we also require that the time between the last output and the termination of the algorithm is bounded by $\cO(q(\size{\cH}))$). It is worth noticing that an algorithm which enumerates a subset
$\cC$ of $2^{V}$ in polynomial delay outputs the set $\cC$ in time $\cO\left(p(\size{\cH}) + q(\size{\cH})\cdot |\cC| + \size{\cC}\right)$ where $p$ and $q$ are respectively the polynomials bounding
the pre-processing time and the delay between two consecutive outputs.  Notice that any polynomial delay algorithm is obviously an output-polynomial one, but not all output-polynomial algorithms are
polynomial delay \cite{Strozecki10}. We say that an output-polynomial algorithm uses polynomial space if there exists a polynomial $q$ such that the space used by the algorithm is bounded by
$q(\size{G})$. The output polynomiality is one of a criteria of efficiency for enumeration algorithms, and polynomial delay is that of more efficiency.  This paper addresses the following question.

% A problem of outputting all the solutions completely without duplications is said \emph{enumeration}.  When an enumeration algorithm always terminates in time polynomial in $n$ and $N$ where $n$ is the
% input size and $N$ is the output size, the algorithm is called \emph{output polynomial}.  The \emph{delay} of an enumeration algorithm is the maximum computation time from the time that a solution is
% output to the time until the next solution, or the termination of the algorithm.  The algorithm is called \emph{polynomial delay} if its preprocessing time and the delay are both polynomial to the
% input size.  It is worth noticing that such an algorithm has a running time bounded by the sum of the preprocessing time, and the delay multiplied by the number of solutions.

\ \\
\noindent
{\bf Question}: Is there a polynomial delay polynomial space algorithm to
 enumerate all minimal edge dominating sets of a given graph?
\ \\

Let $\cH:=(V,\cF)$ be a hypergraph. A \emph{private neighbor of a vertex $x$ with respect to $T\subseteq V$} is a hyper-edge that intersect $T$ only on $x$, and the set of private neighbors of $x$ is denoted by $P_\cH(x,T)$, \ie $P_\cH(x,T):=\{F\in \cF\mid F\cap T= \{x\}\}$.  A subset $T$ of $V(\cH)$ is called an \emph{irredundant set} if $P_\cH(x,T) \ne \emptyset$ for all $x\in T$.  We denote by $IR(\cH)$ the set of irredundant sets of $\cH$.

A \emph{transversal} (or \emph{hitting set}) of $\mathcal{H}$ is a subset of $V$ that has a non-empty intersection with every hyper-edge of $\cF$; it is \emph{minimal} if it does not contain any other
transversal as a proper subset.  It is known that $T$ is a minimal transversal if and only if $T$ is a transversal and $P_\cH(x,T) \ne \emptyset$ for all $x\in T$.  The set of all minimal transversals
of $\mathcal{H}$ is denoted by $tr(\mathcal{H})$.  The \emph{edge neighborhood hypergraph} $\cH(G)$ of $G$ is the hypergraph $(E, \{N[e]\mid e\in E\})$.  The following proposition is easy to obtain.

\begin{proposition}\label{prop:ednt}
For any graph $G:=(V, E)$, $T\subseteq E$ is an edge dominating set of $G$
 if and only if $T$ is a transversal of $\cH(G)$.
Therefore, $T\subseteq E(G)$ is a minimal edge dominating set of $G$
 if and only if $T$ is a minimal transversal of $\cH(G)$.
\end{proposition}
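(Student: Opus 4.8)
The plan is simply to unwind the two definitions and observe that they coincide on the nose. Recall that $\cH(G)=(E,\{N[e]\mid e\in E\})$, so by definition a subset $T\subseteq E$ is a transversal of $\cH(G)$ precisely when $T\cap N[e]\ne\emptyset$ for every $e\in E$. But this is verbatim the condition defining an edge dominating set of $G$. Hence the family of edge dominating sets of $G$ and the family of transversals of $\cH(G)$ are literally the same family of subsets of $E$, which is the first assertion. The only point worth spelling out is that for $e:=xy$ the hyper-edge $N[e]=\inc{x}\cup\inc{y}$ contains $e$ itself (since $e$ is incident to both $x$ and $y$), which matches the usual convention that an edge is adjacent to itself for the purposes of edge domination.

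The ``therefore'' part is then immediate: since the two families of subsets of $E$ agree, so do their inclusion-minimal members, so $T$ is a minimal edge dominating set of $G$ if and only if $T$ is a minimal transversal of $\cH(G)$. If one prefers a more explicit argument, one can instead combine the first assertion with the characterization recalled above, namely that $T\in tr(\cH(G))$ iff $T$ is a transversal of $\cH(G)$ and $P_{\cH(G)}(x,T)\ne\emptyset$ for every $x\in T$; but since minimality here just means inclusion-minimality within a fixed family, this detour is not needed.

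I do not foresee any genuine obstacle: the statement is a bookkeeping bridge whose purpose is to let the remainder of the paper phrase everything about minimal edge dominating sets in the language of minimal transversals of $\cH(G)$. The only care required is the incidence/adjacency bookkeeping noted above, and the observation that inclusion-minimality is preserved when one passes between two descriptions of the same set system.
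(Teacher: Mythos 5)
Your proof is correct and is exactly the definition-unwinding argument the paper has in mind (the paper omits the proof, calling the proposition ``easy to obtain''): the hyper-edges of $\cH(G)$ are precisely the sets $N[e]$, so the transversal condition coincides verbatim with the edge-domination condition, and inclusion-minimality transfers automatically. Nothing further is needed.
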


Our algorithm is based on algorithms developed for enumerating minimal transversals.  Thus, we introduce the dominating set version of the above notions.  For an edge set $T\subseteq E$, a
\emph{private neighbor of an edge $e$ with respect to $T$} is an edge $f$ that is adjacent to $e$ but not to any edge in $T\setminus \{ e\}$.  The set of private neighbors of $e$ is denoted by $P_E(e,
T)$.  For an edge subset $E'$ of $E$, an edge set $T$ is called \emph{transversal} (or \emph{hitting set}) of $E'$, if for any edge $e$ of $E'$, $T$ includes at least one edge $f$ of $E$ that is
adjacent to the edge $e$.  A transversal of $E'$ is \emph{minimal} if it does not contain any other transversal of $E'$ as a proper subset.  Again $T\subseteq E$ is a minimal transversal of $E'$ if
and only if $T$ is a transversal of $E'$ and $P_{E'}(e,T) \ne \emptyset$ for all $e\in T$.  The set of all minimal transversals of $E'$ is also denoted by $tr(E')$.

%%%%%%%%%%%%%%%%%%%%%%%%%%%%%%%%%%%%%%%%%%%%%%%%%%%%%%%%%%%%%%%%%%%%
\section{Berge's Algorithm and Basic Strategy}\label{sec:3}

Hereafter, we explain our approach to enumerate all minimal edge dominating sets.  Our strategy for the enumeration is based on Berge's algorithm \cite{Berge1989}.  For a given hypergraph with
hyper-edges enumerated as $F_1,\ldots, F_m$, let $\cF_j$ be $\{F_1,\ldots, F_j\}$ for each $1\leq j \leq m$.  Roughly, Berge's algorithm computes, for each $1< j\leq m$, $tr(\cF_j)$ from
$tr(\cF_{j-1})$.  Although the algorithm is not polynomial space, there is a way to reduce the space complexity to polynomial.  For $j\geq 1$ and $T\in tr(\cF_j)$, we define the parent $Q'(T,j)$ of $T$ as
follows
\begin{align*}
  Q'(T,j)&:= \begin{cases} T & \textrm{if $T\in tr(\cF_{j-1})$},\\ T\setminus \{v\} & \textrm{if $v$ is such that $P_{\cF_j}(v,T)=\{F_j\}$}. \end{cases}
\end{align*}

% by $T$ if $T\in tr(\cF_{j-1})$, and
% $T\setminus \{ v\}$ otherwise, where $v$ satisfies $P_\cF(v,T) = \{F_j\}$. 
We can observe that $T\not\in tr(\cF_{j-1})$ if and only if $P_{\cF_j}(v,T) = \{F_j\}$ holds for some $v\in T$, thus the parent is well defined and always in $tr(\cF_{j-1})$
\cite{KavvadiasS05,MurakamiU14}.  One can moreover compute the parent of any $T\in tr(\cF_j)$ in time polynomial in $\size{\cH}$. This parent-child relation induces a tree, rooted at $\emptyset$,
spanning all members of $\bigcup\limits_{1\leq j\leq m}tr(\cF_j)$.  We can traverse this tree in a depth-first search manner from the root by recursively generating the children of the current
visiting minimal hitting set.  Any child is obtained by adding at most one vertex, then the children can be listed in polynomial time.  In this way, we can enumerate all the minimal hitting sets of a
hypergraph with polynomial space.

Formally and generally, we consider the problem of enumerating all elements of a set $\cZ$ that is a subset of an implicitly given set $\cX$.  Assume that we have a polynomial time computable parent
function $P:\cX\rightarrow\cX\cup\{ nil\}$.  For each $X\in \cX$, $P(X)$ is called the {\em parent} of $X$, and the elements $Y$ such that $P(Y)=X$ are called {\em children} of $X$.  The parent-child
relation of $P$ is {\em acyclic} if any $X\in \cX$ is not a proper ancestor of itself, that is, it always holds that $X\ne P(P(\cdots P(X))\cdots)$.  We say that an acyclic parent-child relation is
{\em irredundant} when any $X\in \cX$ has a descendant in $\cZ$, in the parent-child relation.  The following statements are well-known in the literature
\cite{AvisF96,ShiouraTU97,MakinoU04,KavvadiasS05,MurakamiU14}.

\begin{proposition}\label{prop:poly-space} All elements in $\cZ$ can be enumerated with polynomial space if there is an acyclic parent-child relation $P:\cX\to \cX\cup \{nil\}$ such that there is a 
  polynomial space algorithm for enumerating all the children of each $X\in \cX\cup \{nil\}$.
\end{proposition}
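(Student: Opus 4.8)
The plan is to describe a depth-first traversal of the ``parent forest'' determined by $P$, output only the elements of $\cZ$, and argue that this runs in polynomial space. First I would set up the combinatorial picture: since the parent-child relation is acyclic, starting from any $X\in\cX$ and repeatedly applying $P$ we reach $nil$ after finitely many steps (no element is its own proper ancestor, and $\cX$ is finite or at least the chain above any element is finite), so the relation organizes $\cX\cup\{nil\}$ into a forest rooted at $nil$ (or a tree, viewing $nil$ as an artificial root). The children of $nil$, together with the recursively generated children of children, reach every element of $\cX$. The key hypothesis is that for each $X\in\cX\cup\{nil\}$ we have a polynomial-space subroutine $\mathrm{Children}(X)$ that enumerates exactly the set $\{Y\in\cX : P(Y)=X\}$.

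Next I would spell out the traversal. We call a recursive procedure $\mathrm{Enum}(X)$: it first tests whether $X\in\cZ$ and, if so, outputs $X$; then it invokes $\mathrm{Children}(X)$ and, for each child $Y$ produced, calls $\mathrm{Enum}(Y)$ before asking $\mathrm{Children}(X)$ for the next child. The whole enumeration is the single call $\mathrm{Enum}(nil)$. Correctness is immediate: every $X\in\cX$ is reached exactly once because each non-root element has a unique parent and the forest is acyclic, so $X$ appears exactly once in the traversal; hence every element of $\cZ\subseteq\cX$ is output exactly once, with no repetitions.

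The space analysis is the point that needs care, and it is the main obstacle to a fully rigorous argument. At any moment the recursion stack holds a root-to-node path $nil=X_0,X_1,\ldots,X_k=X$ in the forest, i.e.\ $X_{i-1}=P(X_i)$; since the relation is acyclic this path has no repeated elements. To bound $k$ one uses that each application of $P$ is computable in polynomial time, so the descriptions of $X_0,\ldots,X_k$ cannot shrink arbitrarily — more to the point, in the concrete setting (Berge's tree, where a child adds at most one vertex) the depth is bounded by the ground-set size, hence polynomial; in the abstract statement one simply assumes, as is standard in this literature, that the height of the forest is polynomially bounded, or equivalently that each element has polynomially bounded size and parents are ``smaller'' in a suitable measure. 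Each stack frame stores one element of $\cX$ (polynomial size) plus the working memory of one suspended call to $\mathrm{Children}$ (polynomial space by hypothesis) plus the memory to test membership in $\cZ$; since there are polynomially many frames and each uses polynomial space, the total is polynomial. I would close by remarking that this is exactly the mechanism used for Berge's algorithm above with $\cX=\bigcup_{1\le j\le m} tr(\cF_j)$, $\cZ=tr(\cF_m)$, and $P=Q'$, which motivates the general statement; the irredundancy condition is not needed here (it matters only for the \emph{time} bound), so it is not invoked in this proof.
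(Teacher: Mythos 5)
Your algorithm is the same one the paper uses: the paper establishes this proposition only by exhibiting the recursive \textsf{ReverseSearch}$(nil)$ traversal of the forest induced by $P$ (and citing the reverse-search literature), and your $\mathrm{Enum}(nil)$ is exactly that procedure. Your correctness argument — acyclicity plus uniqueness of the parent makes $\cX\cup\{nil\}$ a forest rooted at $nil$, so every element is visited exactly once and hence every member of $\cZ$ is output exactly once — matches the paper's reasoning.

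Where your write-up falls short of the statement is the space bound. You keep the entire root-to-node path on the recursion stack, together with one suspended children-enumerator per frame, and you are then forced to assume that the height of the forest is polynomially bounded; that assumption is not part of the proposition (acyclicity alone does not bound the depth — in Berge's concrete setting the depth is at most the number of hyper-edges only because of the specific definition of $Q'$). The standard way to close this gap, and the very reason the framework requires $P$ to be computable in polynomial time, is to traverse the forest \emph{without} a stack: store only the current element $X$; to descend, run the children enumerator of $X$; to return from a child $Y$, recompute $X=P(Y)$ and re-run the children enumerator of $X$, discarding its output until $Y$ reappears, then continue with the next child. This uses only the space of one element, one children-enumeration and one evaluation of $P$ (each polynomial by hypothesis), independently of the depth of the forest, at the cost of extra time — which is irrelevant here since only polynomial space is claimed. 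With that modification your argument proves the proposition as stated; and you are right that irredundancy is not needed for this space-only version, it is used only in Proposition~\ref{prop:poly-delai} for the delay bound.
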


\begin{proposition}\label{prop:poly-delai} All elements in $\cZ$ can be enumerated with polynomial delay and polynomial space if there is an irredundant parent-child relation $P:\cX\to \cX\cup
  \{nil\}$ such that there is a polynomial delay polynomial space algorithm for enumerating all the children of each $X\in \cX\cup \{nil\}$.
\end{proposition}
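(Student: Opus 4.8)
The plan is to realise $\cZ$ as (almost all of) the nodes of a rooted tree and to enumerate it by a depth-first traversal of that tree. Since $P$ is acyclic, for every $X\in\cX$ the sequence $X, P(X), P(P(X)),\dots$ is a finite chain ending at $nil$; declaring $nil$ the root and joining each $X$ to $P(X)$ by an edge therefore produces a rooted tree $\cT$ whose node set is exactly $\cX\cup\{nil\}$, and whose leaves (nodes with no children) all lie in $\cZ$ by irredundancy, a leaf being its own only descendant. I would enumerate $\cZ$ by a DFS of $\cT$ from the root, using at each visited node $X$ the assumed child-enumeration algorithm to generate the children of $X$ one at a time and recursing into each.

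For the polynomial space bound I would not store the whole root-to-current path. Following the reverse-search paradigm, I keep only the current node together with the state of a single run of the child-enumeration algorithm; once the subtree rooted at a child $Y$ of $X$ has been exhausted I move back to $X=P(Y)$ and relocate the position ``just after $Y$'' by re-running the child-enumeration of $X$. At any moment at most one child-enumeration is active and only $O(1)$ nodes are stored, so the space is $O(\mathrm{poly}(\size{\cH}))$. (Alternatively, the space bound is already provided by Proposition \ref{prop:poly-space}, since an irredundant relation is in particular acyclic and a polynomial-delay child-enumeration is in particular a polynomial-space one.)

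The substantial point is the polynomial delay. Outputting each node as soon as it is first visited (pre-order) is not good enough: $\cT$ may contain long root-to-node paths all of whose internal vertices lie in $\cX\setminus\cZ$, so consecutive outputs could be separated by a path that is not bounded by $\mathrm{poly}(\size{\cH})$ unless one controls the height of $\cT$. I would use the well-known alternating-output technique: output $X\in\cZ$ upon \emph{arriving} at $X$ if $\mathrm{depth}_\cT(X)$ is even, and upon \emph{leaving} $X$, after its whole subtree has been processed, if $\mathrm{depth}_\cT(X)$ is odd. Irredundancy supplies the two facts that make this work: every leaf of $\cT$ lies in $\cZ$, and every node of $\cX\setminus\cZ$ has at least one child. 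With these, between two consecutive output events the DFS traverses only $O(h)$ tree edges, where $h$ is the height of $\cT$: a descent cannot proceed more than $h$ steps without producing an output, since it must pass through an even-depth node of $\cZ$ (output on arrival) or reach a leaf (which lies in $\cZ$ and is output on arrival, or on its immediate departure); and an ascent stops as soon as it meets a not-yet-exhausted sibling (turning into a descent) or a now-exhausted odd-depth ancestor in $\cZ$ (output on leaving), so it too has length at most $h$. Each tree-edge traversal costs the polynomial time of one step of a child-enumeration, plus, on a backtracking step, the polynomial cost of relocating a position among siblings; hence the delay is $O(h\cdot\mathrm{poly}(\size{\cH}))$, which is polynomial because in our applications $h$ is the number of groups, at most the number of hyper-edges, hence polynomial in $\size{\cH}$. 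Correctness is immediate: $\cT$ spans $\cX\cup\{nil\}$ and the alternating rule fires exactly once at each node of $\cZ$, so every element of $\cZ$ is output exactly once.

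I expect the delay analysis to be the main obstacle: one must simultaneously adopt the alternating-output rule and exploit irredundancy (leaves are solutions, non-solutions have children) to rule out superpolynomial gaps between consecutive outputs, and then verify that the bookkeeping needed to resume a parent's child-enumeration at the correct place after finishing a subtree adds only polynomial time per backtracking step and no more than polynomial space. The tree-structure set-up and the space argument are routine by comparison.
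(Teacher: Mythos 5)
Your proof is correct and follows essentially the same route as the paper, which states this proposition as well known and simply presents the generic \textsf{ReverseSearch} depth-first traversal of the parent-child tree (with citations) rather than spelling out the delay analysis. One minor remark: the alternating-output device is not what saves you here, since with irredundancy (every leaf lies in $\cZ$) plain pre-order output already yields delay of order the tree height times a polynomial; the real ingredient is the polynomial bound on the height, which you correctly flag and which holds in the paper's applications because the levels are indexed by the (at most $m$) hyper-edge groups.
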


%\noindent
With acyclic (resp., irredundant) parent-child relation $P:\cX\to \cX\cup \{nil\}$, the following algorithm enumerates all elements in $\cZ$, with polynomial space (resp., with polynomial delay and
polynomial space).

\begin{tabbing}
{\bf Algorithm} {\sf ReverseSearch}$(X)$\\
1. {\bf if} $X\in \cZ$ {\bf then output} $X$\\
2. {\bf for} each $Y$ satisfying $X=P(Y)$ {\bf do}\\
3. \ \ \ {\bf call} {\sf ReverseSearch}$(Y)$\\
4. {\bf end for}
\end{tabbing}

The call {\sf ReverseSearch}$(nil)$ enumerates all elements in $\cZ$.  Since the above parent-child relation for transversals $Q'$ is acyclic, the algorithms proposed in
\cite{KavvadiasS05,MurakamiU14} use polynomial space.  However, the parent-child relation $Q'$ is not irredundant and hence {\sf ReverseSearch$(nil)$} does not guarantee a polynomial delay neither an
output-polynomiality.  Indeed, we can expect that the size of $tr(\cF_j)$ increases as the increase of $j$, and it can be observed in practice.  However, $tr(\cF_j)$ can be exponentially larger than
$tr(\cF_m)$, thus Berge's algorithm is not output polynomial \cite{Takata07}.  Examples of irredundant parent-child relations can be found in the literature \cite{AvisF96,ShiouraTU97,MakinoU04}.  

One idea to avoid the lack of irredundancy is to certificate the existence of minimal transversals in the descendants.  Suppose that we choose some levels $1=l_1,\ldots,l_k=m$ of Berge's algorithm,
and state that for any $T\in tr(\cF_{l_j})$, we have at least one descendant in $tr(\cF_{l_{j+1}})$.  This implies that any transversal in $tr(\cF_{l_j})$ has a descendant in $tr(\cF_m)$, thus we can
have an irredundant parent-child relation by looking only at these levels, and the enumeration can be polynomial delay and polynomial space.

%On contrary, for example, the relations in some algorithms are irredundant \cite{AvisF96,ShiouraTU97,MakinoU04}, and run in polynomial delay.
%For example, a polynomial delay algorithm is proposed to $k$-degenerate
% graphs by using this idea\cite{???}.

We will use this idea to obtain a polynomial delay polynomial space algorithm to enumerate the minimal edge dominating sets, the levels are determined with respect to a \emph{maximal matching}.  From
now one we assume that we have a fixed graph $G:=(V,E)$ and we will show how to enumerate all its minimal edge dominating sets. A subset of $E$ is a \emph{matching} if every two of its edges $e$ and
$f$ are not adjacent.  A matching is \emph{maximal} if it is not included in any other matching.  Let $\{b_1,\ldots,b_k\}$ be a maximal matching of $G$, and let $b_i = x_iy_i$.  For each $0\leq i \leq
k$, let $V_i:=V\setminus \left( \bigcup\limits_{i'>i} b_{i'} \right)$, and let $E_i := \{ e \mid e\subseteq V_i\})$.  Let $B_i := E_i\setminus E_{i-1}$ for $i>1$.  Note that any edge in $E_1$ is
adjacent to $b_1$ and by definition $B_i$ never includes an edge $b_j\ne b_i$.  Without loss of generality, we here assume that we have taken a linear ordering $\leq$ on the edges of $G$ so that:~ (1)
for each $e\in B_i$ and each $f\in E_{i-1}$ we have $e< f$,~ (2) for each $e\in \widetilde{N}(x_i)\cap B_i$, each $f\in \widetilde{N}(y_i) \cap B_i$ we have $b_i < e < f$. Observe that with that
ordering we have $e<f$ whenever $e\in B_i$ and $f\in B_j$ with $i<j$. We consider that Berge's algorithm on $\cH(G)$ follows that ordering. In fact we will prove using Berge's algorithm that we can
define an irredundant parent-child relation to enumerate $tr(E_i)$ from $tr(E_{i-1})$.

%%re-put indices to edges so that $e_j\in B_i$ and $e_{j'}\in B_{i'}$ satisfy $j<j'$ if and only if $i<i'$.

%We write $e<f$ for two edges $e$ and $f$ if and only if the index of $e$ is less than $f$.

\begin{lemma}\label{lem:irredundant} Let $1\leq i < k$. Any $T\in tr(E_{i-1})$ has at least one descendant in $tr(E_i)$.
\end{lemma}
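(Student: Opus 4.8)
The plan is to reduce the claim to constructing one minimal transversal of $E_i$ which, when repeatedly mapped by Berge's parent function $Q'$, returns to $T$. First I would pin down the relevant levels: letting $j_0$ and $j_1$ be the levels of Berge's algorithm at which the processing of the hyper-edges coming from $E_{i-1}$, respectively from $E_i$, has been completed, the chosen edge ordering gives $tr(\cF_{j_0})=tr(E_{i-1})$ and $tr(\cF_{j_1})=tr(E_i)$, and the hyper-edges added at the levels $j_0<j\le j_1$ are exactly the sets $N[e]$ with $e\in B_i$.

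Second, I would isolate a sufficient condition for descent: if $T'\in tr(E_i)$ satisfies $T\subseteq T'$ and $P_{E_{i-1}}(v,T')\ne\emptyset$ for every $v\in T$, then $T'$ is a descendant of $T$. Indeed, let $T'=T'_{j_1},T'_{j_1-1},\dots,T'_{j_0}$ be the successive $Q'$-ancestors of $T'$; each $T'_j$ lies in $tr(\cF_j)$, and since $Q'$ discards at most one element per level, $T'_j\subseteq T'$ throughout, so every $v\in T$ still has a private neighbour in $E_{i-1}$ with respect to $T'_j$. But at a level $j$ with $j_0<j\le j_1$, $Q'$ discards only the element whose unique private neighbour is the newly added hyper-edge $N[e_j]$ with $e_j\in B_i$; a private neighbour of $v\in T$ lying in $E_{i-1}$ corresponds to a hyper-edge already present in $\cF_{j-1}$, hence distinct from $N[e_j]$, so no element of $T$ is ever discarded. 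Therefore $T\subseteq T'_{j_0}$, and since $T'_{j_0}\in tr(E_{i-1})$ is inclusion-minimal we get $T'_{j_0}=T$, that is, $T'$ is a descendant of $T$.

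Third, I would produce such a $T'$ using the matching edge $b_i=x_iy_i$. Put $T^{+}:=T\cup\{b_i\}$. Then $T^{+}$ is a transversal of $E_i$: $T$ already meets $N[e]$ for every $e\in E_{i-1}$, and $b_i\in N[e]$ for every $e\in B_i$ because such an $e$ has an endpoint in $\{x_i,y_i\}$. Moreover $b_i\notin T$, and for $v\in T$ a private neighbour $e\in E_{i-1}$ of $v$ with respect to $T$ stays private with respect to $T^{+}$: since $e\subseteq V_{i-1}$ while $x_i,y_i\notin V_{i-1}$, the edges $e$ and $b_i$ share no vertex, so $b_i\notin N[e]$ and adding $b_i$ leaves $N[e]\cap T$ unchanged. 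Now let $T'$ be a transversal of $E_i$ that is inclusion-minimal among those with $T\subseteq T'\subseteq T^{+}$; since $T^{+}\setminus T=\{b_i\}$ we have $T'\in\{T,T^{+}\}$. In both cases every element of $T'$ has a private neighbour in $E_i$ with respect to $T'$ — each $v\in T$ inherits one lying in $E_{i-1}$ from $T^{+}$ (shrinking a set only enlarges private-neighbour sets), and if $b_i\in T'$ then $T$ fails to meet $N[e]$ for some $e\in B_i$, so that $e$ is a private neighbour of $b_i$ — hence $T'\in tr(E_i)$. As $T'$ satisfies the condition of the second step, it is the sought descendant of $T$.

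The conceptual core is the sufficient condition of the second step, and the one point needing genuine care there is the assertion that $Q'$ discards no element of $T$ between levels $j_1$ and $j_0$; this is exactly where it matters that the hyper-edges $\{N[e]\mid e\in B_i\}$ form a contiguous block of the ordering preceded by those of $E_{i-1}$, so that a private neighbour living in $E_{i-1}$ is always already present in $\cF_{j-1}$ and can therefore never be the unique hyper-edge triggering a deletion. Everything else is a direct unwinding of the definitions, and — unlike the considerably harder problem of enumerating the children of a node with polynomial delay — this argument uses no structural property of line graphs.
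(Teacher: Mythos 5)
Your proof is correct and follows essentially the same route as the paper's: add the matching edge $b_i$ to $T$ (or keep $T$ itself when it is already a transversal of $E_i$), verify membership in $tr(E_i)$ via a private neighbour in $B_i$ for $b_i$ and the surviving private neighbours in $E_{i-1}$ for the edges of $T$, and then walk Berge's parent chain back to level $|E_{i-1}|$, noting that no edge of $T$ is ever discarded because its private neighbour lies in $E_{i-1}$ and so is never the newly added hyper-edge. Your write-up is just a more explicit version of the paper's argument (isolating the sufficient condition and finishing by minimality of $T'_{j_0}$ instead of arguing that $b_i$ is eventually removed), but the construction and the key observation are the same.
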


\begin{proof} If $T'\in tr(E_i)$ satisfies $T' = T$, then $T'$ is a descendant of $T$ since the parent is never greater than the child.  If $T\not\in tr(E_i)$, some edges $X$ of $B_i$ are not
  dominated by $T$, and consider $T' := T\cup \{b_i\}$.  We observe that $b_i$ is adjacent to all edges of $B_i$ and the edges in $X$ are private neighbors of $b_i$ in $T'$, thus $T'$ is included in
  $tr(E_i)$.  Let us compute the ancestor of $T'$ in $E_{i-1}$ as follows: set $T":=T'$ and repeatedly compute the parent of $T"$ and set $T"$ to its parent, until reaching a minimal transversal in
  $tr(E_{i-1})$. In this process no vertex of $T$ is removed since each vertex in $T$ has a private neighbor in $E_{i-1}$. But, at some point $b_i$ is removed from $T'$ since it is the only one in
  $T'$ which has a private neighbor in $B_i$. This means that $T$ is an ancestor of $T'$, and thus $T$ always has a descendant in $tr(E_i)$.
\end{proof}

For conciseness, we introduce a new parent-child relation for edge dominating set enumeration.  For $T\in tr(E_i)$, let $Q'_j(T, |E_i|)$ be the ancestor of $T$ located on the $j$-th level of Berge's
algorithm, i.e., $Q'_j(T, |E_i|) = Q'(Q'(\cdots(T, |E_i|), |E_i|-1),\cdots, j+1)$.  Then, we define the \emph{skip parent} $Q(T, i)$ of $T$ by $Q'_{|E_i-1|}(T, |E_i|)$.  $T'$ is a \emph{skip-child} of
$T\in tr(E_{i-1})$ if and only if $T'\in tr(E_i)$ and $Q(T', i) = T$.  The set of skip-children of $T\in tr(E_i)$ is denoted by $\cC(T, i)$. From Propositions \ref{prop:poly-space} and
\ref{prop:poly-delai}, and Lemma \ref{lem:irredundant} we have the following proposition.

\begin{proposition}\label{prop:strategy} If we can list all skip-children of $T\in tr(E_i)$, for each $1\leq i \leq k$, with polynomial delay and polynomial space, then we can enumerate all minimal
  edge dominating sets with polynomial delay and polynomial space.
\end{proposition}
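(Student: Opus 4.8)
The proof of Proposition \ref{prop:strategy} is essentially a bookkeeping argument that assembles the ingredients already laid out: the skip-parent function $Q(\cdot,i)$, the irredundancy guaranteed by Lemma \ref{lem:irredundant}, and the two generic enumeration templates of Propositions \ref{prop:poly-space} and \ref{prop:poly-delai}. The plan is to exhibit an \emph{irredundant acyclic parent-child relation} on the universe $\cX := \bigcup_{0\leq i\leq k} tr(E_i)$ whose target set is $\cZ := tr(E_k) = tr(\cH(G))$ (the latter equality because $E_k = E$ and by Proposition \ref{prop:ednt} these are exactly the minimal edge dominating sets), and then invoke Proposition \ref{prop:poly-delai} with the assumed skip-children oracle playing the role of the child-enumeration subroutine.

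\smallskip
\textbf{Step 1: Define the parent relation.} For $T\in tr(E_i)$ with $i\geq 1$, set $P(T) := Q(T,i) \in tr(E_{i-1})$; for $T\in tr(E_0)$ set $P(T):=nil$. (Note $E_0$ consists of edges inside $V_0 = V\setminus(b_1\cup\dots\cup b_k)$; if $E_0=\emptyset$ then $tr(E_0)=\{\emptyset\}$ and $P(\emptyset)=nil$.) By the discussion preceding the statement, $Q(T,i)$ is obtained by iterating the well-defined Berge parent $Q'$ down through the levels of $B_i$, so $P$ is well-defined and polynomial-time computable. The children of $T\in tr(E_i)$ under $P$ are exactly the skip-children $\cC(T,i+1)\subseteq tr(E_{i+1})$ (and the roots have no further parent). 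In particular every element of $\cX$ lies in some $tr(E_i)$, and its image under $P$ decreases $i$ by one, so the relation is \emph{acyclic}.

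\smallskip
\textbf{Step 2: Irredundancy.} I must check that every $X\in\cX$ has a descendant in $\cZ = tr(E_k)$. Take $X\in tr(E_i)$. By Lemma \ref{lem:irredundant} (applied with index $i+1$, valid since $i+1\leq k$), $X$ has a descendant in $tr(E_{i+1})$ \emph{in the Berge parent-child relation}; one must observe that a descendant of $X$ in the sense of iterated $Q'$ that happens to lie in $tr(E_{i+1})$ is precisely a skip-descendant, i.e. reachable from $X$ by following skip-children — this is because the skip-parent $Q(\cdot,i+1)$ is just the composition of the $Q'$-parents across level block $B_{i+1}$, so the skip-parent of that descendant, after possibly peeling within $B_{i+1}$, is $X$ (or an ancestor that is itself a skip-descendant of $X$; a short induction on the level closes this). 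Iterating from $i$ up to $k$ yields a descendant in $tr(E_k)$. Hence $P$ is irredundant. One small point to verify carefully: when $X=X'$ as sets but viewed at different levels (the first case of Lemma \ref{lem:irredundant}), the skip-parent is the identity, so this "stuttering" step is legitimately a parent-child step and does not break acyclicity because the level index still strictly increases.

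\smallskip
\textbf{Step 3: Invoke the template.} By hypothesis, for each $1\leq i\leq k$ we can list $\cC(T,i)$ for any $T\in tr(E_{i-1})$ with polynomial delay and polynomial space; together with the (trivial, bounded) enumeration of $tr(E_0)$ as the set of roots, this gives a polynomial-delay polynomial-space child-enumeration subroutine for $P$ on all of $\cX\cup\{nil\}$. Since $P$ is an irredundant parent-child relation, Proposition \ref{prop:poly-delai} applied to $\cX$, $\cZ=tr(E_k)$ and $P$ shows that {\sf ReverseSearch}$(nil)$ enumerates all minimal edge dominating sets with polynomial delay and polynomial space. \emph{The main obstacle} in writing this cleanly is Step 2: one must be precise that "descendant in $tr(E_{i+1})$" in Lemma \ref{lem:irredundant} coincides with "reachable by skip-children", i.e. that collapsing each contiguous block $B_{i+1}$ of Berge levels into a single skip-step does not lose any of the needed descendants — everything else is immediate from the cited propositions.
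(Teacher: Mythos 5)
Your proposal is correct and follows essentially the same route as the paper, which states Proposition~\ref{prop:strategy} as an immediate consequence of Propositions~\ref{prop:poly-space} and~\ref{prop:poly-delai} together with Lemma~\ref{lem:irredundant}: you simply make explicit the irredundant acyclic parent-child relation given by the skip-parent $Q(\cdot,i)$ and plug in the assumed skip-children oracle. Your extra care in Step 2 (that a Berge descendant in $tr(E_i)$ is indeed a skip-child, since the skip-parent is by definition the iterated Berge parent down to level $|E_{i-1}|$) and the remark about ``stuttering'' levels are exactly the bookkeeping the paper leaves implicit.
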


But, as we will show in the next section, for a transversal $T$ in $tr(E_{i-1})$, the problem of finding a transversal of $tr(E_i)$ including $T$ is \np\!\!-complete in general. In order to overcome this difficulty, we
will identify a pattern, that we call an \emph{$H$-pattern}, that makes the problem difficult.  We will first show that one can enumerate with polynomial delay and polynomial space all the
skip-children that include no edges from $H$-patterns, and then define a new parent-child relation that will allow to enumerate also with polynomial delay and polynomial space the other skip-children
in a different way.  In the following sections, we explain the methods for the enumeration.

% \begin{framed}
% \noindent \textsf{\sc Irredundant Minimal Transversal (IMT)}.\\
% \noindent {\bf Input.}\ A minimal transversal $T$ of $tr(E_{i-1})$. \\
% \noindent {\bf Output.}\ Is there a selection $Z$ such that
%  $T\cup Z\in tr(E_i)$?
% \end{framed}

% \begin{theorem} \label{thm:IMT}
%     IMT is NP-complete.
% \end{theorem}

%%%%%%%%%%%%%%%%%%%%%%%%%%%%%%%%%%%%%%%%%%%%%%%%%%%%%%%%%%%%%%%%%%%%%%%%%%%
\section{Computing Skip-Children}\label{sec:4}

Let $T$ be in $tr(E_{i-1})$ and $T'\in tr(E_i)$ a skip-child of $T$.  First notice that every edge in $T'\setminus T$ can have a private neighbor only in $B_i$. Indeed every edge in $E_{i-1}$ is
already dominated by $T$ and an edge in $T'\setminus T$ is only used to dominate an edge in $B_i$. Moreover, an edge $e\ne b_i$ in $\inc{x_i}\cap (T'\setminus T)$ (resp. in $\inc{y_i}\cap (T'\setminus
T)$) can have private neighbors only in $\inc{x_i}\cap B_i$ (resp. $\inc{x_i}\cap B_i$). And from the proof of Lemma \ref{lem:irredundant} if $b_i\in T'\setminus T$ then $T'\setminus T=\{b_i\}$. 

Let us first consider the case that every edge in $T'\setminus T$ is adjacent to $b_i$.  From our discussion above, when two edges in $T'\setminus T$ are incident to $x_i$ (resp. $y_i$), they cannot
have both private neighbors.  Thus $T'\setminus T$ can include at most two such edges.  Therefore, by choosing all combinations of one or two edges adjacent to $b_i$, adding them to $T$ and then
checking if the skip-parent of the resulting set is $T$, we can enumerate all the skip-children $T'$ of $T$ such that $T'\setminus T\subseteq B_i$ with polynomial delay and polynomial space.  

We now consider the remaining case that an edge in $T'\setminus T$ is not adjacent to $b_i$.  We call such a skip-child \emph{extra}. We can see that at least one edge $f\ne b_i$ adjacent to $b_i$ has
to be included in $T'$ to dominate $b_i$.  Actually, since $b_i< e$ for any $e\in B_i\setminus \{b_i\}$, any extra skip-child of $T$ is a descendant of some $T\cup \{f\}$ with $f\ne b_i$ incident to
$x_i$ or $y_i$ in the original parent-child relation.  So, without loss of generality, we will assume that such an edge $f\ne b_i$ is incident to $x_i$ and is included in $T$.  Hereafter, we suppose
that $N(y_i) := \{z_1,\ldots,z_k\}$ and assume $T'$ is an extra skip-child of $T$.

A vertex $z_h\in N(y_i)\cap V_i$ is \emph{free} if it is not incident to an edge in $T$, and is \emph{non-free} otherwise.  A free vertex is said to be \emph{isolated} if it is not incident to an edge
in $E_{i-1}$.  Clearly, if there is an isolated free vertex, then $T$ has no extra skip-child.  Thus, we assume that there is no isolated free vertex.  Edges in $E_i\setminus B_i$ that are incident to
some free vertices are called \emph{border edges}.  Observe that any border edge $vz_h$ incident to a free vertex $z_h$ is adjacent to an edge $vw\in T$ if $v\in V_{i-1}$.  The set of border edges is
denoted by $Bd(T, i)$.  Note that no edge in $Bd(T, i)$ is incident to two free vertices, otherwise the edge is in $E_{i-1}$ but not dominated by $T$, and then any border edge is incident to exactly
one free vertex.  We can see that an edge of $B_i$ incident to $y_i$ is not dominated by $T$ if and only if it is incident to a free vertex, and any edge in $T'\setminus T$ that is not incident to
$x_i$ is a border edge.  Then, for any border edge set $Z\subseteq Bd(T, i)$, $T\cup Z \in tr(E_i)$ only if each free vertex has a border edge $e\in Z$ incident to it.  Since any border edge is
incident to exactly one free vertex, for any $Z\subseteq Bd(T,i)$ such that $T\cup Z$ is irredundant and for any edge $vz_h\in Z$ with free vertex $z_h$, $P_{E_i}(e, T\cup Z)$ is always $\{vz_h\}$.
This implies that $T\cup Z$ is in $tr(E_i)$ only if $Z\subseteq Bd(T,i)$ includes exactly one edge incident to each free vertex.  We call such an edge set $Z$ a \emph{selection}.  We observe that all
border edges are dominated by $Z$.  We have the following lemma which is straightforward to prove.

\begin{lemma}\label{lem:l0} For any edge subset $Z$  with $Z\cap T=\emptyset$, there holds $T\cup Z \in tr(E_i)$ only if $Z$ is a selection.
\end{lemma}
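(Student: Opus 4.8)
The claim is: for any edge subset $Z$ with $Z\cap T=\emptyset$, $T\cup Z\in tr(E_i)$ only if $Z$ is a selection (i.e.\ $Z\subseteq Bd(T,i)$ and $Z$ contains exactly one edge incident to each free vertex of $N(y_i)\cap V_i$). The plan is to argue by contrapositive, assuming $T\cup Z\in tr(E_i)$ and deriving that $Z$ must have the required form, using only the combinatorial observations established in the paragraph preceding the lemma.

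First I would record the structural facts already proved in the text that I am allowed to cite: (1) every edge in $(T\cup Z)\setminus T = Z$ can have a private neighbor only in $B_i$, since edges of $E_{i-1}$ are already dominated by $T$; (2) under the standing assumption that $T'$ is \emph{extra} with a witnessing edge $f\ne b_i$ incident to $x_i$ lying in $T$, any edge of $Z$ that is not incident to $x_i$ must be a border edge (this is exactly the sentence ``any edge in $T'\setminus T$ that is not incident to $x_i$ is a border edge''); and (3) an edge of $B_i$ incident to $y_i$ fails to be dominated by $T$ precisely when it is incident to a free vertex, while every border edge is incident to exactly one free vertex (since a border edge incident to two free vertices would lie in $E_{i-1}$ and be undominated by $T$, contradicting $T\in tr(E_{i-1})$).

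Now the argument: since $T\cup Z$ is a transversal of $E_i$ and $T$ alone already dominates $E_{i-1}$, the only work $Z$ does is to dominate the edges of $B_i$ that $T$ misses; in particular every free vertex $z_h$ must have some edge incident to it in $T\cup Z$ — but no edge of $T$ is incident to a free vertex by definition, so $Z$ must contain an edge incident to $z_h$. So $Z$ contains \emph{at least} one edge incident to each free vertex. For the ``exactly one'' and ``$Z\subseteq Bd(T,i)$'' parts I use minimality (irredundancy): every edge $e\in Z\subseteq T\cup Z$ must have $P_{E_i}(e,T\cup Z)\ne\emptyset$, and by fact (1) this private neighbor lies in $B_i$. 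An edge $e\in Z$ that is not incident to $x_i$ is a border edge incident to a unique free vertex $z_h$ (fact (2)+(3)); its private neighbors are then forced to be among the $B_i$-edges at $z_h$, and if $Z$ contained a second edge incident to $z_h$, every such candidate private neighbor would be adjacent to that second edge of $Z$ too, killing $P_{E_i}(e,T\cup Z)$ — exactly the reasoning ``for any edge $vz_h\in Z$ with free vertex $z_h$, $P_{E_i}(e,T\cup Z)$ is always $\{vz_h\}$''. Hence each free vertex receives exactly one edge of $Z$, and that edge is a border edge. Finally I must rule out edges of $Z$ incident to $x_i$: any such edge lies in $B_i$, and since we already have the witness $f\in T$ incident to $x_i$, the discussion at the start of Section~4 (two $x_i$-incident edges in $T'\setminus T$ cannot both have private neighbors, combined with $f$ being available) shows such an edge would be redundant in $T\cup Z$; I would phrase this carefully to conclude $Z$ has no edge at $x_i$, so every edge of $Z$ is a border edge and $Z$ is a selection.

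The main obstacle is the bookkeeping around the edges incident to $x_i$: the text's setup fixes one $x_i$-incident edge $f$ in $T$, and I need to be precise that no further $x_i$-incident edge can sit in $Z$ while keeping $T\cup Z$ irredundant — this is where I must lean most carefully on the earlier paragraph's case analysis rather than re-deriving it, and where a sloppy statement would leave a gap. The rest is a direct unfolding of definitions (transversal $\Rightarrow$ covers free vertices; irredundant $\Rightarrow$ each added edge keeps a private neighbor in $B_i$; border-edge geometry $\Rightarrow$ uniqueness), which is why the lemma is billed as straightforward.
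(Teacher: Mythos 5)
Your middle step is fine and matches what the paper itself has in mind: for $Z\subseteq Bd(T,i)$, transversality of $T\cup Z$ forces at least one border edge of $Z$ at each free vertex (the edges $y_iz_h$ with $z_h$ free are exactly the edges of $B_i$ undominated by $T$), and privacy of each added border edge forces at most one, since the only candidate private neighbors of a border edge $vz_h$ lie among the $B_i$-edges at $z_h$. Note, for calibration, that the paper gives no explicit proof of this lemma (it is declared ``straightforward''), and the paragraph preceding it only ever argues under the hypothesis $Z\subseteq Bd(T,i)$; your reconstruction of that part is essentially the paper's intended justification.

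The genuine gap is precisely in the two places where you go beyond that hypothesis, i.e.\ where $Z$ may meet $B_i$. First, your opening claim that every free vertex $z_h$ must receive an edge of $Z$ \emph{incident to $z_h$} silently assumes $Z$ has no edge at $y_i$: the undominated edge $y_iz_h$ could instead be hit by $b_i$ or by some $y_iz_j\in Z$, and you never rule such edges out. Second, and more seriously, your exclusion of $x_i$-incident edges does not go through. You argue that, since $f\in T$ already sits at $x_i$, a further edge $x_iu\in Z$ would be left without a private neighbor; this tacitly assumes the private neighbors of $x_iu$ must lie in $\inc{x_i}\cap B_i$. But they need only lie in $B_i$ and be adjacent to $x_iu$, so if $u$ happens to be a free vertex in $N(x_i)\cap N(y_i)$, the edge $y_iu$ can serve as a private neighbor of $x_iu$ — it meets $x_iu$ at $u$, not at $x_i$, and $f$ does not touch it. Taking $Z:=\{x_iu\}$ together with one border edge at each remaining free vertex (chosen so that no edge of $T$ loses its private neighbor, which is easy to realize) gives $T\cup Z\in tr(E_i)$ — one can even arrange it to be an extra skip-child of $T$, so the section's standing assumptions do not exclude it — while $Z$ is not a selection, since $x_iu\notin Bd(T,i)$ and $u$ carries no border edge of $Z$. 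So the redundancy argument you lean on cannot close the case $Z\cap B_i\neq\emptyset$; what your proof actually establishes is the lemma with the additional hypothesis $Z\subseteq Bd(T,i)$ (equivalently, no edge of $Z$ incident to $x_i$ or $y_i$), which is the only form the surrounding text supports and later uses. A correct write-up should either state that hypothesis explicitly or supply a genuinely new argument for edges of $Z$ inside $B_i$; the configuration above shows the case analysis you cite cannot provide it.
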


An edge $e\in T$ is called \emph{redundant} if all edges in $P_{E_{i-1}}(e, T)$ are border edges and no edge $y_iz_h$ is in $P_{E_i}(e, T)$. 

\begin{lemma}\label{lem:l2} If $T$ has a redundant edge, then any selection $Z$ does not satisfy $T\cup Z\in tr(E_i)$.
\end{lemma}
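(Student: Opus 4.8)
The plan is to show that a redundant edge $e$ of $T$ has no private neighbour with respect to $T\cup Z$ in $E_i$, for any selection $Z$. Since $T\cup Z\in tr(E_i)$ would require (by the characterisation of minimal transversals recalled in the preliminaries) that $P_{E_i}(g,T\cup Z)\neq\emptyset$ for every $g\in T\cup Z$, exhibiting a single edge of $T\cup Z$, namely $e$, whose private‑neighbour set in $E_i$ is empty already gives $T\cup Z\notin tr(E_i)$, which is the statement of the lemma.

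First I would use the monotonicity $P_{E_i}(e,T\cup Z)\subseteq P_{E_i}(e,T)$: an edge adjacent to $e$ but to no edge of $(T\cup Z)\setminus\{e\}$ is, in particular, adjacent to no edge of $T\setminus\{e\}$. It therefore suffices to prove that every $g\in P_{E_i}(e,T)$ is adjacent to some edge of $Z$; indeed, as $Z\cap T=\emptyset$, such a $g$ is then adjacent to an edge of $(T\cup Z)\setminus\{e\}$, so $g\notin P_{E_i}(e,T\cup Z)$, whence $P_{E_i}(e,T\cup Z)=\emptyset$. Now split $P_{E_i}(e,T)$ along $E_i=E_{i-1}\sqcup B_i$. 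The part lying in $E_{i-1}$ is exactly $P_{E_{i-1}}(e,T)$; by the first clause in the definition of a redundant edge each such $g$ is a border edge, hence incident to a unique free vertex $z$, and since $Z$ is a selection it contains a (unique) edge incident to $z$, which shares the vertex $z$ with $g$ and so is adjacent to it. For the part lying in $B_i$: every edge of $B_i$ is incident to $x_i$ or to $y_i$; if $g$ is incident to $y_i$ then $g$ has the form $y_iz_h$, which the second clause of the definition forbids from being in $P_{E_i}(e,T)$; and if $g$ is incident to $x_i$ then $g$ shares $x_i$ with the distinguished edge $f\in T$ incident to $x_i$, so $g$ is adjacent to $f$, and since $g\in P_{E_i}(e,T)$ has no neighbour in $T\setminus\{e\}$ this forces $e=f$. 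Assembling these cases gives $P_{E_i}(e,T\cup Z)=\emptyset$.

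The step that needs the most care is this last one: ruling out that $e$ acquires a brand‑new private neighbour inside $B_i$ when we pass from $E_{i-1}$ to $E_i$. The edges of $B_i$ incident to $y_i$ are precisely those excluded by the second clause of the definition, and the edges of $B_i$ incident to $x_i$ are killed by the observation, made in the discussion preceding the lemma, that $f$ is adjacent to every edge of $\inc{x_i}\cap B_i$. This leaves only the degenerate case $e=f$, which I would handle separately: writing $f=x_if'$, the relation $T\in tr(E_{i-1})$ forces $P_{E_{i-1}}(f,T)\neq\emptyset$, so by the first clause $f$ has a border‑edge private neighbour $f'z$ with $z$ free, while the second clause applied with $z_h=x_i$ (so that $y_iz_h=b_i$) shows that $b_i$ is already adjacent to some edge of $T\setminus\{f\}$; a short case analysis on the selection's edge at $z$ and on this extra edge of $T$ then shows that $f$ too has no private neighbour in $E_i$ with respect to $T\cup Z$.
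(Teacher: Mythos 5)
Your treatment of the main case is sound and is essentially the paper's argument (every $E_{i-1}$-private neighbour of a redundant $e$ is a border edge, hence gets dominated by the selection edge at its unique free vertex, while the second clause kills the $y_i$-side of $B_i$); in fact you are more careful than the paper, since you explicitly worry about new private neighbours in $B_i\cap\inc{x_i}$. The genuine gap is precisely the degenerate case $e=f$ that this extra care uncovers. From the second clause you correctly get that $b_i$ is adjacent to some edge of $T\setminus\{f\}$, i.e.\ $T$ has another edge at $x_i$ or at $y_i$. If it is at $x_i$, your argument closes. But if the only other such edge is incident to $y_i$, the conclusion you assert -- that ``a short case analysis \dots shows that $f$ too has no private neighbour in $E_i$ with respect to $T\cup Z$'' -- is false in general, so no case analysis can establish it. Indeed $f$ may keep a private neighbour $x_iw\in B_i$ with $w\notin N(y_i)$, $w$ not covered by $T$ and not adjacent to any free vertex: selection edges lie in $E_{i-1}$ and are incident to free vertices, so none of them can touch $x_iw$. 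Concretely, take $E_{i-1}=\{xz,us\}$, $B_i=\{x_iy_i,\,x_ix,\,x_iw,\,y_iu,\,y_iz\}$ and $T=\{x_ix,\,y_iu\}$; then $z$ is the unique free vertex, $f=x_ix$ is redundant, the unique selection is $Z=\{xz\}$, and $P_{E_i}(f,T\cup Z)=\{x_iw\}\neq\emptyset$.

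The lemma is still true in that situation, but the witness of non-minimality is a different edge: when $T$ contains edges at both $x_i$ and $y_i$, $T$ already dominates all of $B_i$, so $T\in tr(E_i)$; since a redundant edge forces at least one free vertex (its $E_{i-1}$-private neighbours are border edges, which exist by minimality of $T$), every selection $Z$ is non-empty and disjoint from $T$, and the proper superset $T\cup Z$ of a minimal transversal cannot itself be minimal -- equivalently, each edge of $Z$ ends up with an empty private neighbourhood (in the example above it is $xz$, not $f$, that has $P_{E_i}(xz,T\cup Z)=\emptyset$). So your overall plan ``exhibit the redundant edge as the edge without private neighbours'' cannot be carried through uniformly; you must switch witnesses (or invoke the non-minimality argument just sketched) in this sub-case. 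For what it is worth, the paper's own proof is silent about private neighbours in $B_i\cap\inc{x_i}$ and jumps directly to $P_{E_i}(e,T\cup Z)=\emptyset$, so it glosses over exactly the point where your write-up breaks; but as submitted, your last step claims something that is not provable, and the proof is incomplete there.
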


\begin{proof} Let $e$ be a redundant edge of $T$. Since any border edge $f$ is incident  to a free vertex $z_h$, any selection $Z$ should contain one edge incident to $z_h$ and then if $f$ is incident
  to $e$, we have $f\notin P_{E_i}(e,T\cup Z)$. Since no edge $y_iz_h$ is in $P_{E_i}(e,T)$, there holds that $P_{E_i}(e,T\cup Z)=\emptyset$ for any selection $Z$.
\end{proof}

Let $X_T := \{ e\in Bd(T, i) \mid \exists e'\in T\ \textrm{and}\ P_{E_i}(e', T\cup \{ e\})\subseteq Bd(T, i)\}$.  The addition of any edge $e\in X_T$ to $T$ transforms an edge $e'$ of $T$ into a
redundant one with respect to $T\cup \{e\}$, and thus by Lemma \ref{lem:l2} for any $Z\subseteq Bd(T, i)$, $T\cup Z\in tr(E_i)$ holds only if $Z\cap X_T = \emptyset$. Therefore, the following follows. 

\begin{lemma}\label{lem:l3} If a free vertex is not incident to an edge in $Bd(T, i)\setminus X_T$, then any $Z\subseteq Bd(T, i)$ does not satisfy $T\cup Z\in tr(E_i)$.
\end{lemma}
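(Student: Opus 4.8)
The plan is to argue by contradiction, using Lemma~\ref{lem:l0} to force any $Z$ with $T\cup Z\in tr(E_i)$ to be a selection, and then using the definition of $X_T$ -- the same mechanism that drives Lemma~\ref{lem:l2} -- to make some edge of $T$ lose all of its private neighbors. Concretely, suppose a free vertex $z_h$ is incident to no edge of $Bd(T,i)\setminus X_T$, let $Z\subseteq Bd(T,i)$ be arbitrary, and assume toward a contradiction that $T\cup Z\in tr(E_i)$. Every border edge is incident to a free vertex and hence lies outside $T$, so $Z\cap T=\emptyset$ and Lemma~\ref{lem:l0} applies: $Z$ is a selection, so it contains a unique edge $f_0$ incident to $z_h$. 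By the choice of $z_h$ this edge cannot lie in $Bd(T,i)\setminus X_T$, hence $f_0\in X_T$; fix an edge $e'\in T$ witnessing $f_0\in X_T$, i.e.\ with $P_{E_i}(e',T\cup\{f_0\})\subseteq Bd(T,i)$.

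The next, and main, step is to show $P_{E_i}(e',T\cup Z)=\emptyset$, which contradicts $T\cup Z\in tr(E_i)$ because $e'\in T\subseteq T\cup Z$ and every element of a minimal transversal must have a private neighbor. Since $e'\in T\subseteq T\cup\{f_0\}\subseteq T\cup Z$ and private-neighbor sets can only shrink when the transversal grows ($e'\in A\subseteq B$ implies $P_{E_i}(e',B)\subseteq P_{E_i}(e',A)$), we get $P_{E_i}(e',T\cup Z)\subseteq P_{E_i}(e',T\cup\{f_0\})\subseteq Bd(T,i)$; thus any candidate private neighbor $g$ of $e'$ with respect to $T\cup Z$ is a border edge, incident to exactly one free vertex $z_{h'}$, and $g\notin T$. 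Recall that $g\in P_{E_i}(e',T\cup Z)$ means $N[g]\cap(T\cup Z)=\{e'\}$, and that an edge is in particular a member of its own closed edge-neighborhood $N[\cdot]$. As $Z$ is a selection it contains a unique edge $g'$ incident to $z_{h'}$. If $g'\ne g$, then $g'$ shares the vertex $z_{h'}$ with $g$ so $g'\in N[g]$, while $g'\in Z\subseteq T\cup Z$ and $g'\ne e'$ (since $g'\in Z$ and $e'\in T$), contradicting $N[g]\cap(T\cup Z)=\{e'\}$. If $g'=g$, then $g\in Z\subseteq T\cup Z$ and $g\in N[g]$, which forces $g=e'$, impossible since $g\notin T$ whereas $e'\in T$. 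Hence no such $g$ exists, $P_{E_i}(e',T\cup Z)=\emptyset$, and the contradiction is reached; so no $Z\subseteq Bd(T,i)$ satisfies $T\cup Z\in tr(E_i)$.

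The main obstacle is exactly this last case analysis: verifying that, once the selection $Z$ picks one edge at each free vertex, every border edge incident to that vertex -- including the picked edge $f_0$ itself -- ceases to be a private neighbor of $e'$; this is where it matters that private neighbors are taken in $\cH(G)$, so that an edge lies in its own closed edge-neighborhood, and it is the same bookkeeping already carried out in the proof of Lemma~\ref{lem:l2}. Alternatively, one can shorten the argument by quoting directly the observation established just before the statement, namely that $T\cup Z\in tr(E_i)$ implies $Z\cap X_T=\emptyset$: combined with the fact that a selection must contain an edge incident to $z_h$, which by hypothesis lies in $X_T$, Lemma~\ref{lem:l3} follows at once.
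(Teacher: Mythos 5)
Your proof is correct and follows essentially the paper's route: Lemma~\ref{lem:l3} is stated there as an immediate consequence of Lemma~\ref{lem:l0} (any $Z$ with $T\cup Z\in tr(E_i)$ is a selection, hence must pick a border edge at every free vertex) combined with the observation preceding the lemma that $T\cup Z\in tr(E_i)$ forces $Z\cap X_T=\emptyset$. Your only deviation is that you verify $Z\cap X_T=\emptyset$ directly from the definition of $X_T$, showing the witness $e'\in T$ loses all its private neighbors once the whole selection is added, rather than invoking Lemma~\ref{lem:l2} applied to $T\cup\{f_0\}$; this is a harmless, and in fact slightly more careful, expansion of the same argument.
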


%%%%%%%%%%%%%%%%%%%%%%%%%%%%%%%%%%%%%%%%%%%%%%%%%%%%%%%%%%%%%%%%
One can hope that we can characterize the selections $Z$ not intersecting $X_T$ such that $T\cup Z\in tr(E_i)$ and be able to use it for listing the extra skip-children. Unfortunately, checking whether
there is such a selection $Z$ is \np\!\!-complete.

\begin{framed}
\noindent \textsf{\sc Irredundant Minimal Transversal (IMT)}.\\
\noindent {\bf Input.}\ A minimal transversal $T$ of $tr(E_{i-1})$. \\
\noindent {\bf Output.}\ Is there a selection $Z$, $Z\cap X_T=\emptyset$, such that $T\cup Z\in tr(E_i)$?
\end{framed}

\begin{theorem} \label{thm:IMT}
    IMT is NP-complete.
\end{theorem}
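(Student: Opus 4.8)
The plan is to prove NP-completeness by reduction from a well-known NP-complete problem — the natural candidate is 3-\sat or, perhaps more convenient here, \textsc{Set Cover}/\textsc{Hitting Set} in a bounded form, since the whole obstruction is fundamentally about forcing a selection to also serve as a hitting set for some auxiliary structure. Membership in \np is immediate: given a candidate selection $Z$, one checks in polynomial time that $Z\subseteq Bd(T,i)$ picks exactly one edge per free vertex, that $Z\cap X_T=\emptyset$, and that $T\cup Z$ is a transversal of $E_i$ with $P_{E_i}(e,T\cup Z)\ne\emptyset$ for every $e\in T\cup Z$; all of these are local tests over the explicitly given graph. So the bulk of the work is the hardness direction.

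For the reduction I would build, from a \sat instance (or hitting-set instance), a graph $G$ together with a maximal matching $\{b_1,\dots,b_k\}$ and an index $i$ so that the gadget sits inside $B_i$ and around $y_i$. The key levers available are exactly the ones developed in the preceding paragraphs: free vertices $z_h\in N(y_i)$ force a selection to contain exactly one border edge at each $z_h$ (Lemma \ref{lem:l0}), and an edge $e\in T$ can be made to ``need'' a private neighbor that lives among the border edges, so that the choice of which border edge is selected at a free vertex determines whether $e$ keeps a private neighbor (this is the mechanism behind Lemmas \ref{lem:l2} and \ref{lem:l3} and the set $X_T$). Concretely, I would introduce one free vertex $z_h$ per variable, with two border edges at $z_h$ encoding the two truth values; I would introduce, for each clause, an edge $e_C\in T$ whose private neighbors in $E_{i-1}$ are arranged to be precisely the border edges corresponding to literals \emph{not} satisfying $C$ together with one extra ``safety'' private neighbor of the form $y_iz_h$ only when the clause is satisfied — so that $e_C$ retains a private neighbor in $T\cup Z$ exactly when $Z$ picks a literal satisfying $C$. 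The $X_T$-avoidance constraint is used to rule out the degenerate selections that would trivially kill some edge of $T$, so that the only surviving selections correspond to satisfying assignments. One then checks that $T$ is indeed a minimal transversal of $E_{i-1}$ (by padding each edge of $T$ with a private neighbor inside $E_{i-1}$ that is not a border edge) and that the construction, maximal matching, and edge ordering conventions from Section \ref{sec:3} can all be met.

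The main obstacle, and where I would spend most of the care, is the interaction of three simultaneous requirements on $Z$: that it be a \emph{selection} (exactly one edge per free vertex — this is what encodes ``assign each variable''), that it avoid $X_T$ (so it does not create a redundant edge of $T$ ``by accident''), and that $T\cup Z$ be not merely a transversal of $E_i$ but an \emph{irredundant} one, i.e. every old edge $e'\in T$ must still have a surviving private neighbor once all the border edges in $Z$ are added. Getting the clause gadgets to interfere with the variable gadgets only through the intended ``satisfaction'' relation — and not through stray adjacencies that either force $Z\cap X_T\ne\emptyset$ or destroy a private neighbor of an unrelated $e'\in T$ — is the delicate part, and it is exactly here that one must exploit the freedom to place the gadget inside $B_i$ (so new edges are incident to $y_i$ or to fresh vertices) together with the ordering constraints so that the skip-parent computation behaves as intended. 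Once the gadget is isolated correctly, the equivalence ``$\exists$ satisfying assignment $\iff$ $\exists$ selection $Z$ with $Z\cap X_T=\emptyset$ and $T\cup Z\in tr(E_i)$'' follows by reading off the construction in both directions.
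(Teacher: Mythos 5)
Your overall plan (reduction from 3SAT, using the forced selection at free vertices to encode choices and the irredundancy of edges of $T$ to encode constraints) is in the same spirit as the paper's proof, but what you give is a plan rather than a proof, and the concrete mechanism you sketch would fail. You let the free vertices represent variables and introduce one clause edge $e_C\in T$ whose private neighbors are ``the border edges corresponding to literals not satisfying $C$''. A private neighbor of $e_C$ that is itself a border edge is incident to a free vertex, and every selection contains exactly one border edge at that free vertex, which is adjacent to it (or equal to it); hence all such private neighbors are destroyed by \emph{every} selection, independently of the chosen truth values. In fact such an $e_C$ is exactly a redundant edge in the sense of Lemma \ref{lem:l2}, so your reduced instance would always be a no-instance. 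The proposed repair, a ``safety'' private neighbor of the form $y_iz_h$ present ``only when the clause is satisfied'', is not a construction-time condition (satisfaction depends on $Z$), and when such a neighbor exists it survives essentially every selection, making the clause constraint vacuous. If one instead tries to realize the intended behaviour by letting the falsifying private neighbor of $e_C$ share the non-free endpoint $v$ of the falsifying border edge, then the $T$-edge dominating that border edge at $v$ must be $e_C$ itself (otherwise the edge is not private to $e_C$), which cannot hold simultaneously for the several clauses in which the same literal occurs; so the transposed encoding has a genuine fan-out obstruction, not merely missing details.

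The paper's construction resolves precisely this by exchanging the roles: the free vertices are the clause vertices $C_h$ (neighbors of the new vertex $y$), so a selection picks one literal occurrence per clause and thus directly certifies satisfaction, while consistency of the assignment is what is delegated to irredundancy, via the variable edges $x_j\overline{x_j}\in T$ whose two non-border private neighbors $x_jz_j$ and $\overline{x_j}y_j$ forbid selecting occurrences of both $x_j$ and $\overline{x_j}$. This works because irredundancy naturally expresses an ``at most one side'' constraint, whereas a selection naturally expresses an ``exactly one choice'' constraint; your assignment of roles asks irredundancy to express clause satisfaction, which the gadget cannot realize as described. Your NP-membership observation is fine (the paper does not even spell it out), but the hardness direction needs an explicit construction, and the one you outline would have to be redesigned essentially into the paper's.
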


\begin{proof}%[Proof of Theorem \ref{thm:IMT}]
  We reduce the 3SAT problem to IMT. Let $\mathcal{C}=C_1,C_2,\cdots,C_m$ be the sets of clauses of a 3SAT instance over variables $x_1,...,x_n$. We create an instance of IMT as follows:
    \begin{itemize}
    \item $V_{i-1}:=\{ x_j \mid j\leq n \} \sqcup \{ \overline{x_j} \mid j\leq n \} \sqcup \{ C_h \mid h\leq m\} \sqcup \{ z_j \mid j\leq n \}\sqcup \{ y_j \mid j\leq n\} \sqcup \{ w \}$
    \item $E_{i-1}:= \{ x_jC_h \mid x_j\in C_h \} \sqcup \{ \overline{x_j}C_h \mid \overline{x_j}\in C_h \} \sqcup \{ x_jz_j\} \mid j\leq n \} \sqcup \{ \overline{x_j}y_j \mid j\le
    n \} \sqcup \{ x_j\overline{x_j} \mid j\leq n \}$
  \item $V_i:=V_{i-1} \sqcup \{ x,y \}$
  \item $E_i:= E_{i-1} \sqcup \{ yC_h \mid h\leq m \}\sqcup \{ xy, xw \}$
  \item $T:=\{ x_j\overline{x_j} \mid j\leq n \}\sqcup \{xw\}$
    \end{itemize}

\begin{figure}[ht]
\begin{center}
\includegraphics[scale=0.7]{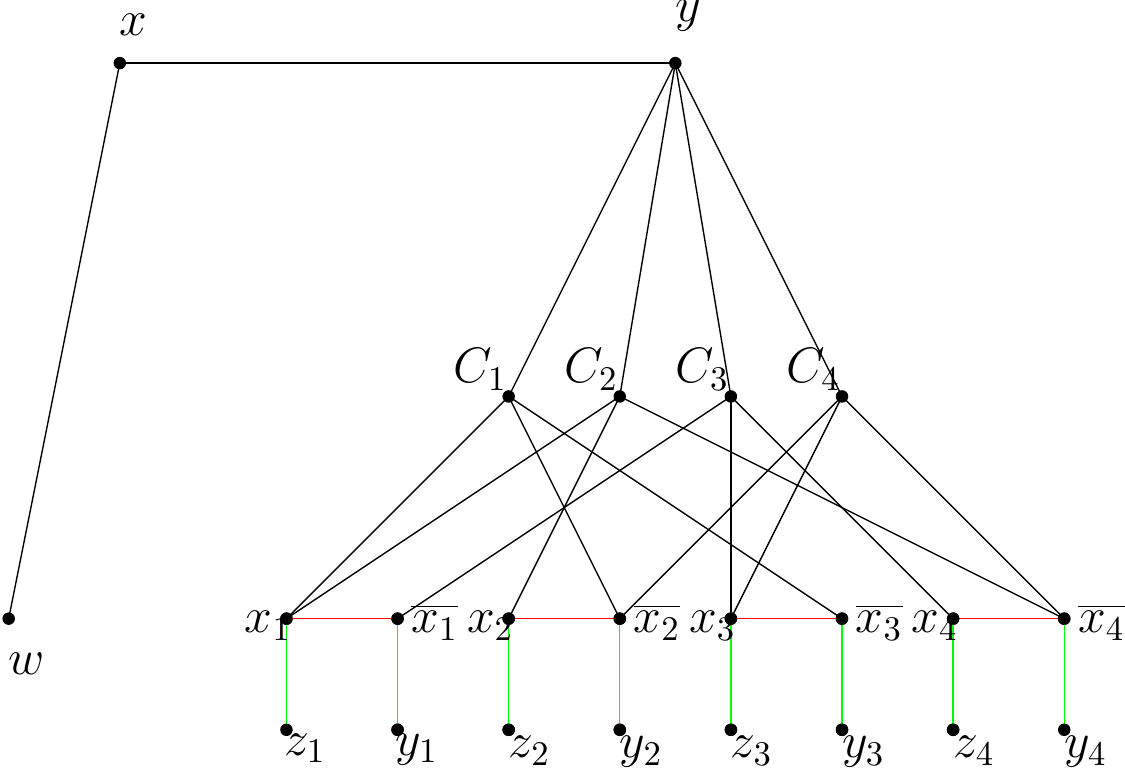}
\end{center}
\end{figure}

Notice that $X_T=\emptyset$, and then any selection $Z$ is such that $Z\cap X_T=\emptyset$. Now we claim that the 3SAT instance is satisfiable if and only if there is an edge set $Z\subset Bd(T,i)$
with $ Z\cap T=\emptyset$ such that $T\cup Z$ is included in $tr(E_i)$.

 Note that here we have $Bd(T, i) = \{ x_jC_h \mid x_j\in C_h \}\cup \{\overline{x_j}C_h \mid \overline{x_j}\in C_h \}$.  Notice now that a subset $Z$ of $Bd(T, i)$ is such that $T\cup Z\notin
 tr(E_i)$ if and only if $Z$ contains an edge of $\inc{x_i}$ and an edge of $\inc{\overline{x_i}}$ for some $i\leq n$.

Let $f:\{ x_1,\cdots,x_n \}\to \{ 0,1 \}$ be an assignment to the
 variables which satisfies the 3SAT formula.
Then consider the following subset $Z$ of $Bd(T, i)$, 
 \begin{align*}
   Z&:=\left(\bigcup\limits_{x_j\mid f(x_j)=1} \inc{x_j}\cap Bd(T, i)\right) \cup \left(\bigcup\limits_{x_j\mid f(x_j)=0} \inc{\overline{x_j}}\cap Bd(T, i)\right).
\end{align*}

Clearly, since $f$ satisfies the formula, $Z$ is a selection in $Bd(T, i)$ since otherwise a clause would not be satisfied by $f$.  Notice now that by construction, either $\inc{x_j}\cap Z=\emptyset$
or $\inc{\overline{x_j}}\cap Z=\emptyset$ for every $j\leq n$, and then there exists $Z'\subseteq Z$ such that $T\cup Z'$ is in $tr(E_i)$.

Assume now that there exists a selection $Z$ such that $T\cup Z\in tr(E_i)$.
Let $f:\{ x_1,\cdots,x_n \}\to \{ 0,1 \}$ be such that:

\begin{align*}
 f(x_j)&:=\begin{cases} 1 & \textrm{if $\inc{x_j}\cap Z\neq \emptyset$} \\ 0 & \textrm{if $\inc{\overline{x_j}}\cap Z\neq \emptyset$}. \end{cases}
\end{align*}

Notice first that $f$ is well-defined.  Indeed, assume that for some $x_j$, we have $\inc{x_j}\cap Z\neq \emptyset$ and $\inc{\overline{x_j}}\cap Z\neq \emptyset$.  Then the private neighbor of
the edge $x_j\overline{x_j}$ with respect to $T\cup Z$ would be empty, contradicting the fact that $T\cup Z\in tr(E_i)$.  Now since $Z$ is a selection of $Bd(T, i)$, for every $h\leq m$, $Z\cap
\inc{C_h}\neq \emptyset$ and then there exists either $x_j\in C_h$ with $f(x_j)=1$ or $\overline{x_j}\in C_h$ with $f(x_j)=0$.  Thus $f$ satisfies all clauses.  
\end{proof}

\begin{figure}[ht]
\begin{center}
\includegraphics[scale=0.45]{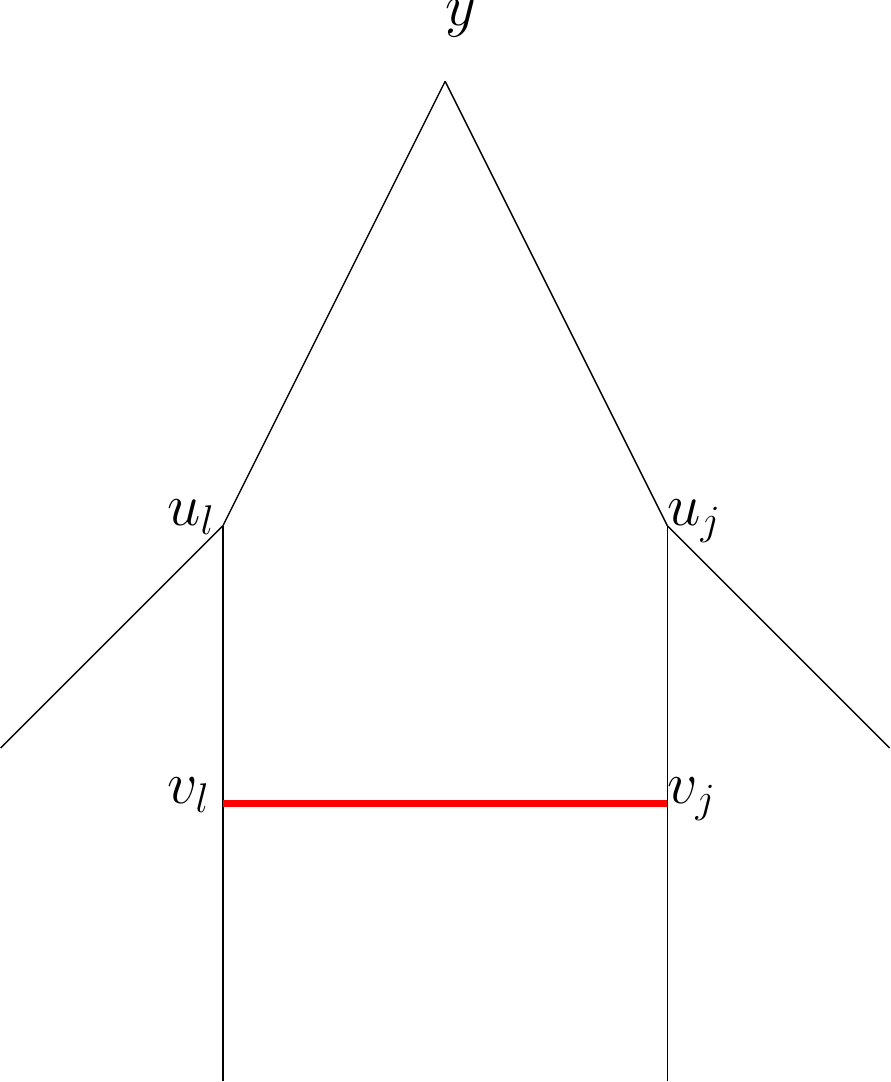}~~
\includegraphics[scale=0.45]{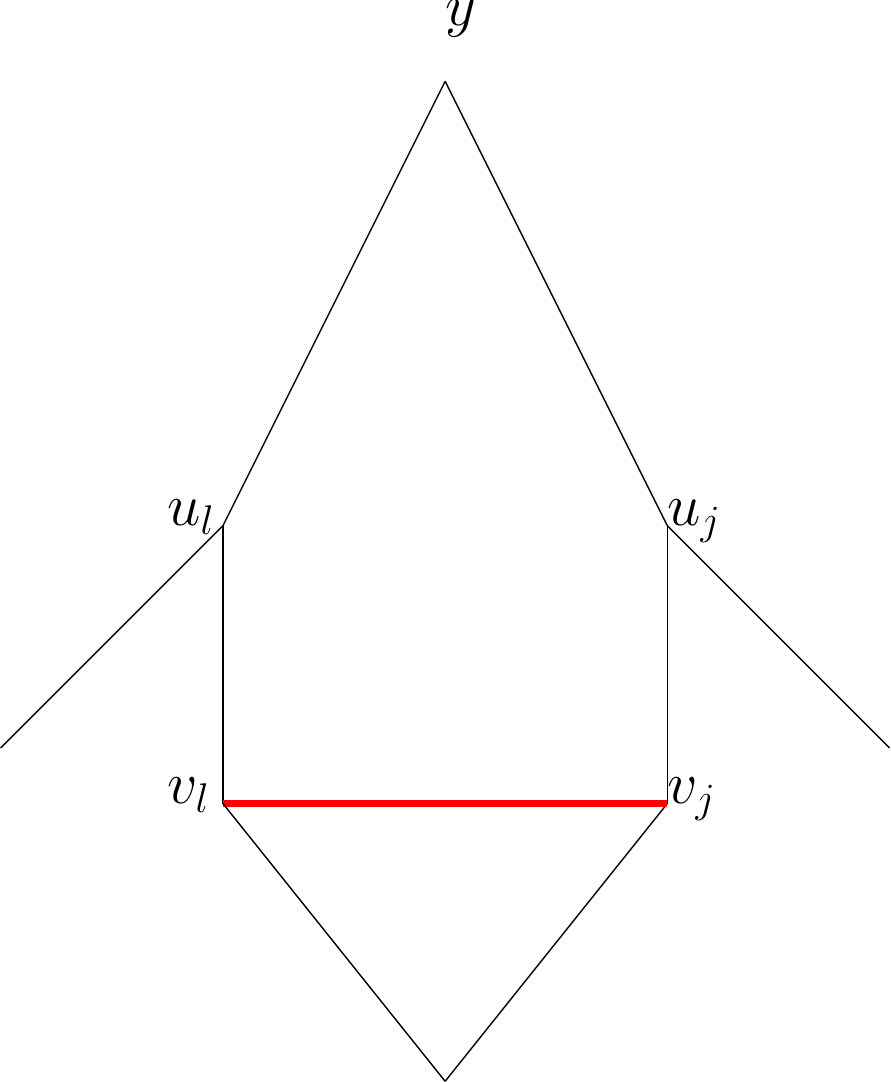}
\caption{Examples of  $H$-patterns: $u_l$ and $u_j$ are the free vertices and $v_lv_j$ is the edge in $T$}
 \label{fig:H}
\end{center}
\end{figure}

In order to overcome this difficulty, we identify a pattern, that we call an \emph{$H$-pattern}, that makes the problem difficult.

%%%%%%%%%%%%%%%%%%%%%%%%%%%%%%%%%%%%%%%%%%%%%%%%%%%%%%%%%%%%%%%%%

\begin{definition}[$H$-Pattern] A vertex set $\{ z_{\ell},v_{\ell},z_j,v_j\}$ is an \emph{$H$-pattern} if $z_{\ell}$ and $z_j$ are free vertices, $v_{\ell}v_j$ is in $T$, and $v_{\ell}v_j$ has two
  non-border private neighbors in $E_{i-1}\setminus T$: one is adjacent to $v_{\ell}$ and the other to $v_j$.  We also say that the edges $z_{\ell}v_{\ell}$, $z_jv_j$ and $v_{\ell}v_j$ induces an $H$-pattern.
\end{definition}

We will see that the difficulty of IMT comes from the presence of $H$-patterns.  Indeed, for an $H$-pattern $\{ z_{\ell},v_{\ell},z_j,v_j \}$, any private neighbor of $v_{\ell}v_j$ is adjacent to
either $z_{\ell}v_{\ell}$ or to $z_jv_j$, thus we cannot add both to a selection $Z$ since in that case $P_{E_i}(v_{\ell}v_j,T\cup Z)$ will be empty.  Let $H_T$ be the set of border edges included in
an $H$-pattern.  In the next two subsections we will see how to list selections including no edge from $H_T$, and those that do.

\begin{lemma}\label{lem:HyperReduit} If $T$ has no redundant edge, then $T\cup Z\in tr(E_i)$ holds for any selection $Z\subseteq Bd(T, i)\setminus (X_T\cup H_T)$.
\end{lemma}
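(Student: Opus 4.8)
The plan is to verify the two conditions characterising membership in $tr(E_i)$: that $T\cup Z$ is a transversal of $E_i$, and that every edge of $T\cup Z$ has a private neighbor with respect to $T\cup Z$ in $E_i$. Transversality is routine: the edges of $E_{i-1}$ are already dominated by $T$; every edge of $B_i$ incident to $x_i$ is dominated by the fixed edge $f\in T$ incident to $x_i$; and an edge of $B_i$ incident to $y_i$, which has the form $y_iz_h$, is dominated by $T$ when $z_h$ is non-free and by the unique edge of $Z$ incident to $z_h$ when $z_h$ is free (such an edge exists because $Z$ is a selection). The private neighbors of the edges of $Z$ are also easy: if $e=vz_h\in Z$ with $z_h$ free, then $y_iz_h\in P_{E_i}(e,T\cup Z)$, because no edge of $T\cup Z$ is incident to $y_i$ and, since $z_h$ is free and $Z$ is a selection, $e$ is the only edge of $T\cup Z$ incident to $z_h$.

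The heart of the proof is to show that every $e'\in T$ retains a private neighbor with respect to $T\cup Z$; suppose for contradiction that $P_{E_i}(e',T\cup Z)=\emptyset$ for some $e'\in T$. First I would show that $e'$ must have an endpoint, say $v_\ell$, that is the non-free endpoint of some edge $u_\ell v_\ell\in Z$ with $u_\ell$ free. Indeed, since $e'$ is not redundant it has either a non-border private neighbor in $P_{E_{i-1}}(e',T)$ or a private neighbor of the form $y_iz_h$ in $E_i$; each such witness lies in $P_{E_i}(e',T)$ but, by assumption, not in $P_{E_i}(e',T\cup Z)$, so it is adjacent to some edge $uv\in Z$ with $u$ free and $v$ non-free. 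A non-border witness is not incident to the free vertex $u$, hence incident to $v$; since a border edge $uv$ with $v\in V_{i-1}$ is adjacent to a $T$-edge at $v$, and the witness lies in $P_{E_{i-1}}(e',T)$, that $T$-edge must be $e'$, so $v$ is an endpoint of $e'$ (in the $y_iz_h$ case one first checks that $z_h$ is non-free and is itself an endpoint of $e'$, and argues similarly). Next, since $u_\ell v_\ell\in Z$ and $Z\cap X_T=\emptyset$, instantiating the definition of $X_T$ at $e'$ yields a non-border edge $g_j\in P_{E_i}(e',T\cup\{u_\ell v_\ell\})\subseteq P_{E_i}(e',T)$; as $g_j$ is not adjacent to $u_\ell v_\ell$ it is not incident to $v_\ell$, hence incident to the other endpoint $v_j$ of $e'$, and $g_j\ne e'$, $g_j\notin T$. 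If $g_j$ is adjacent to no edge of $Z$ it witnesses $P_{E_i}(e',T\cup Z)\ne\emptyset$, a contradiction; otherwise, using again that a border edge is adjacent to a $T$-edge at its non-free endpoint and that $g_j\in P_{E_{i-1}}(e',T)$, the blocking $Z$-edge must meet $g_j$ at $v_j$, so $v_j$ too is the non-free endpoint of some $u_jv_j\in Z$, with $u_j\ne u_\ell$ by the selection property. Applying the $X_T$ step once more to $u_jv_j$ produces a non-border private neighbor $g_\ell\in E_{i-1}\setminus T$ of $e'$ incident to $v_\ell$. Now $u_\ell,u_j$ are free vertices, $v_\ell v_j=e'\in T$, and $g_\ell,g_j$ are two distinct non-border private neighbors of $e'$ in $E_{i-1}\setminus T$, one incident to $v_\ell$ and the other to $v_j$; hence $\{u_\ell,v_\ell,u_j,v_j\}$ is an $H$-pattern and its border edge $u_\ell v_\ell$ lies in $H_T$, contradicting $Z\cap H_T=\emptyset$. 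Therefore $P_{E_i}(e',T\cup Z)\ne\emptyset$ for every $e'\in T$, and $T\cup Z\in tr(E_i)$.

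The main obstacle is this last multi-step chain and the bookkeeping it requires: one must confirm that the private-neighbor witnesses produced along the way are genuinely non-border edges of $E_{i-1}\setminus T$ distinct from $e'$, and separately dispose of the degenerate situations in which an endpoint of $e'$ is $x_i$ (so a candidate private neighbor could lie in $B_i$ rather than in $E_{i-1}$), in which $e'=f$, or in which a private neighbor coincides with $e'$ or with $b_i$; these follow the same pattern, using that every $B_i$-edge incident to $x_i$ is adjacent to $f$ and that $b_i$ is adjacent to no edge of $Z$. Everything else — transversality, the private neighbors of the $Z$-edges, the selection property, and the fact that a border edge is adjacent to a $T$-edge at its non-free endpoint — is routine and already established before the lemma.
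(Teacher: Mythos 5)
Your generic chain is sound and, in that regime, is a more careful version of the paper's brief case analysis: when the witnesses supplied by non-redundancy and by $Z\cap X_T=\emptyset$ are edges of $E_{i-1}$, your bookkeeping (non-border witnesses avoid free vertices, the non-free endpoint of a border edge carries a $T$-edge, the selection property forces $u_j\ne u_\ell$) does produce an $H$-pattern and a contradiction. The genuine gap is the case you set aside as degenerate, namely when the witnesses are the $B_i$-edges $y_iz_h$. Non-redundancy of $e'=v_\ell v_j$ only guarantees a non-border private neighbor in $E_{i-1}$ \emph{or} a private neighbor of the form $y_iz_h$, and $u_\ell v_\ell\notin X_T$ only guarantees a witness in $P_{E_i}(e',T\cup\{u_\ell v_\ell\})$ outside $Bd(T,i)$, which may equally well be $y_iv_j\in B_i$. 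If both endpoints of $e'$ lie in $N(y_i)$, no other $T$-edge meets them, and every private neighbor of $e'$ inside $E_{i-1}\setminus T$ is a border edge, then your chain only ever produces the two witnesses $y_iv_j$ and $y_iv_\ell$, which are legitimately killed by the selection edges $u_jv_j$ and $u_\ell v_\ell$; no $H$-pattern in the paper's sense arises, because the definition demands two non-border private neighbors in $E_{i-1}\setminus T$ and here there are none. So no contradiction is reached, and this case does not ``follow the same pattern''; your listed tools (every $B_i$-edge at $x_i$ is adjacent to $f$, $b_i$ meets no edge of $Z$) address only the $x_i$-side degeneracies, not this $y_i$-side one.

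This is not mere bookkeeping, because the configuration is realizable. Take vertices $u,a_1,a_2,v,w,p,x_i,y_i$, maximal matching $wp,\,a_1a_2,\,x_iy_i$ (so $b_i=x_iy_i$), further edges $a_1v,\,ua_1,\,a_2v,\,x_iw$ and $y_ia_1,\,y_iv,\,y_iu,\,y_ia_2$, and $T=\{a_1v,\,x_iw\}$ with $f=x_iw$. Then $T\in tr(E_{i-1})$, the free vertices are $u$ and $a_2$, $Bd(T,i)=\{ua_1,a_1a_2,a_2v\}$, $T$ has no redundant edge (because $y_ia_1,y_iv\in P_{E_i}(a_1v,T)$), and $X_T=H_T=\emptyset$ since the only non-border private witnesses available to $a_1v$ lie in $B_i$; yet for the selection $Z=\{ua_1,a_2v\}$ one checks $P_{E_i}(a_1v,T\cup Z)=\emptyset$, so $T\cup Z\notin tr(E_i)$. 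Hence the deferred case cannot be closed as written: one must either broaden the notion of $H$-pattern so that the edges $y_iz_h$ may serve as the two non-border witnesses (after which your chain does yield a forbidden pattern) or add a hypothesis excluding such configurations. For comparison, the paper's own proof is a short case analysis asserting that each $e\in T$ keeps either a private neighbor adjacent to no border edge or a private neighbor $y_iz_h$; it is exactly this second alternative that is fragile, since $y_iz_h$ stops being private once $Z$ selects a border edge whose non-free endpoint is $z_h$, which is the same spot where your argument breaks.
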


\begin{proof} From the definition, $T\cup Z$ dominates all the edges in $E_i$ and for each $e\in Z$ it holds that $P_{E_i}(e,T\cup Z)\ne \emptyset$. Since $Z$ includes no edge from $H_T\cup X_T$, and
  $T$ has no redundant edge, one easily checks from Lemmas \ref{lem:l0}, \ref{lem:l2} and \ref{lem:l3} by case analysis that any edge $e\in T$ has a private neighbor $f$ that is adjacent to no border
  edge, or an edge $y_iz_h$ is adjacent to $e$ and not to edges in $T\setminus \{e\}$.  Thus, either $f\in P_{E_i}(e,T\cup Z)$ or $y_iz_h\in P_{E_i}(e,T\cup Z)$. These imply that $T\cup Z$ is in
  $tr(E_i)$.
\end{proof}

%%%%%%%%%%%%%%%%%%%%%%%%%%%%%%%%%%%%%%%%%%%%%%%%%%%%%%%%%%%%%%%%%%%%%%%%%%%
\subsection{Dealing with Redundancies} \label{ssec:redundancy}

The lemmas above demonstrate how to construct transversals $T'\in tr(E_i)$ from $T$, but some generated transversals may not be extra skip-children of $T$.  This is because such $T'$ can be also
generated from another transversals in $tr(E_{i-1})$.  Such redundancies happen for example when two edges $f_1$ and $f_2$ in $T'$ have private neighbors only in $B_i$, but after the removal of either
one from $T'$, the other will have a private neighbor outside $B_i$.  Assuming in this case that $f_1\in T$ and $f_2\in T'\setminus T$, it holds that $T'$ can be generated from $T$ or from $(T\setminus
\{f_1\})\cup \{f_2\}$. And since the number of selections $Z$ such that $T\cup Z\in tr(E_i)$ can be arbitrarily large, we need to avoid such redundancies.  

To address this issue, we state the following lemmas to characterize the edges not to be added to selections $Z$ such that $T\cup Z$ is an extra skip-child of $T$.  We say that a border edge $vz_\ell$
is \emph{preceding} if there is an edge $vz_h$ in $T$ satisfying $P_{E_{i-1}}(vz_h, T)\subseteq N[vz_\ell]$ and $y_iz_{\ell}<y_iz_h$, and denote the set of preceding edges by $X'_T$.  We also say that
an edge $vz_h\in T$ is {\em fail} if $P_{E_{i-1}}(vz_h, T)\subseteq Bd(T, i)$, $y_iz_h$ is in $P_{E_i}(vz_h, T)$, and no edge $wz_{\ell}\in P_{E_{i-1}}(vz_h, T)$ satisfies $y_iz_h < y_iz_\ell$.

\begin{lemma}\label{lem:l0-r} For any selection $Z$ including a preceding edge, $T\cup Z$ is not an extra skip-child of $T$.
\end{lemma}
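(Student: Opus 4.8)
Proof proposal (plan).

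The plan is a proof by contradiction that follows one edge through the Berge--parent computation. Let $vz_\ell\in Z$ be the preceding border edge, witnessed by an edge $vz_h\in T$ with $P_{E_{i-1}}(vz_h,T)\subseteq N[vz_\ell]$ and $y_iz_\ell<y_iz_h$. Since $z_\ell$ is free and $Z$ is a selection, $vz_\ell$ is the unique edge of $T\cup Z$ incident to $z_\ell$. If $T\cup Z\notin tr(E_i)$ it is certainly not a skip-child and we are done, so put $T':=T\cup Z$ and assume $T'\in tr(E_i)$; I will show that the skip-parent $Q(T',i)$ differs from $T$, which already prevents $T'$ from being a skip-child of $T$, hence from being an extra one.

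First I would locate the private neighbours of $vz_h$ in $T'$. From $T\subseteq T'$ we get $P_{E_{i-1}}(vz_h,T')\subseteq P_{E_{i-1}}(vz_h,T)\subseteq N[vz_\ell]$, and since $vz_\ell\in T'\setminus\{vz_h\}$ no member of $N[vz_\ell]$ can be a private neighbour of $vz_h$ with respect to $T'$; hence $P_{E_{i-1}}(vz_h,T')=\emptyset$, and, as $T'\in tr(E_i)$, $vz_h$ has a private neighbour inside $B_i$. Because $vz_h\in E_{i-1}$ we have $v,z_h\notin\{x_i,y_i\}$, so the only edges of $B_i$ adjacent to $vz_h$ are $x_iv,y_iv,x_iz_h,y_iz_h$; the first two are adjacent to $vz_\ell$ through $v$, and $x_iz_h$ is adjacent through $x_i$ to the fixed edge $f\neq b_i$ incident to $x_i$ that (by the standing assumption of this section) belongs to $T$, with $f\ne vz_h$. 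Therefore $y_iz_h\in P_{E_i}(vz_h,T')$, and since $y_i$ carries no edge of $T'$ this forces $vz_h$ to be the only edge of $T'$ incident to $z_h$.

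The heart of the proof is the claim that, during the stage of the computation of $Q(T',i)$ that removes the hyper-edges $N[e]$ with $e\in B_i$ incident to $y_i$ (treated in decreasing order of $e$, per the fixed ordering), at least one edge of $T$ is deleted; since the peeling never re-adds an edge, this already gives $Q(T',i)\ne T$ and finishes the proof. I would prove the claim by contradiction. If no edge of $T$ is ever deleted, then the current set $S$ always contains $T$, hence contains $f$ and $vz_\ell$, and $z_h$ still carries only $vz_h$ in $S$. An edge that is the sole edge of $S$ incident to a vertex $z_m\in N(y_i)$ keeps $N[y_iz_m]$ as a private neighbour, so by the definition of the Berge parent both $vz_h$ and $vz_\ell$ survive until the steps processing $N[y_iz_h]$ and $N[y_iz_\ell]$ respectively; as $y_iz_\ell<y_iz_h$, the latter step is strictly later, so $vz_\ell\in S$ at the moment $N[y_iz_h]$ is processed. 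At that moment the second paragraph applies verbatim with $S$ in place of $T'$: the edges of $E_{i-1}$ adjacent to $vz_h$ are blocked either by $vz_\ell$ (those meeting $v$, and those in $P_{E_{i-1}}(vz_h,T)$, which lie in $N[vz_\ell]$) or by an edge of $T\subseteq S$ other than $vz_h$ (the rest, which meet only $z_h$ and are adjacent to such an edge because they are not in $P_{E_{i-1}}(vz_h,T)$), while $x_iv,y_iv$ are blocked by $vz_\ell$ and $x_iz_h$ by $f$. Hence $N[y_iz_h]$ is the unique private neighbour of $vz_h$ in $S$, so $vz_h$ is exactly the edge the Berge parent deletes at this step — contradicting the assumption that no edge of $T$ is deleted.

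The delicate point, on which I would spend the most care, is precisely this induction: one must be certain that at the instant $N[y_iz_h]$ is treated the set $S$ still contains $T$ — in particular $f$, $vz_\ell$, and the $T$-edges that block the $z_h$-side $E_{i-1}$-neighbours of $vz_h$ — rather than some relevant edge having disappeared at an earlier $y_i$-step. This is exactly where the chosen edge ordering ($y_i$-edges largest within $B_i$, and $y_iz_\ell<y_iz_h$) together with the selection structure of $Z$ are used; the remaining verifications are routine adjacency checks.
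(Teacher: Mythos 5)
Your proof is correct and follows essentially the same route as the paper's: you simulate the skip-parent (Berge) peeling down to the level of $y_iz_h$ and show that, with the preceding edge $vz_\ell$ still present (guaranteed by $y_iz_\ell<y_iz_h$) and the edge $f\in T$ at $x_i$ blocking the $x_i$-side candidates, the unique remaining private neighbour of $vz_h$ at that level is $N[y_iz_h]$, so $vz_h$ is removed and the skip-parent cannot be $T$. The only difference is minor: where the paper picks the preceding edge whose witness $y_iz_{h_j}$ is maximal in order to argue that all of $T$ survives down to that level, you use a dichotomy (either some edge of $T$ is deleted earlier, which already gives $Q(T\cup Z,i)\neq T$, or none is and then $vz_h$ is deleted at its step), an equally valid handling of the same point.
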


\begin{proof} We can assume without loss of generality that $T\cup Z\in tr(E_i)$, otherwise the statement holds. Suppose that there are several edges $v_1z_{h_1},\cdots,v_1z_{h_p}$ in $T$ that are
  adjacent to preceding edges included in $Z$, and among them let $v_jz_{h_j}$ be such that $y_iz_{h_j}$ is greater than any $y_iz_{h_\ell}$ for $1\leq \ell \leq p$ and $\ell\ne j$, and let
  $v_jz_{\ell_j}$ be the preceding edge included in $Z$ such that $P_{E_{i-1}}(vz_{h_j},T)\subseteq N[v_jz_{\ell_j}]$. Let $t$ be the index of $y_iz_{h_j}$ in our ordering of the edges of $E$ and let
  $\cF_t$ be the first $t$ edges of $E$ in our ordering (which includes of course $y_iz_{h_j}$).  Since $P_{E_{i-1}}(vz_{h_j},T)\subseteq N[v_jz_{\ell_j}]$ no edge of $T$ but $v_jz_{h_j}$ is adjacent
  to $y_iz_{h_j}$ otherwise $P_{E_i}(v_jz_{h_j},T\cup Z)$ would be empty, and then $P_{E_i}(v_jz_{h_j},T\cup Z)=\{y_iz_{h_j}\}$. From the choice of $v_jz_{h_j}$ it follows that for any edge $e\in T$
  that is adjacent to a neighbor $z_h$ of $y_i$ and such that $y_iz_h>y_iz_{h_j}$ there exists an edge $f\in E_{i-1}\cap P_{E_i}(e,Q'_t(T\cup Z, |E_i|))$. Thus, every edge in $T$ has a private neighbor in $\cF_t$
  and hence $Q'_t(T\cup Z, |E_i|)$ includes all edges of $T$.  But since $y_iz_{h_j}$ has index $t$ and $P_{\cF_t}(v_jz_{h_j},T\cup Z)=P_{E_i}(v_jz_{h_j},T\cup Z)=\{y_iz_{h_j}\}$, we can conclude that
  $Q'_{t-1}(T\cup Z,|E_i|)$ does not contain $v_jz_{h_j}$, and then the skip-parent of $T\cup Z$ does not include $v_jz_{h_j}$. Therefore, $T\cup Z$ is not an extra skip-child of $T$.
\end{proof}

\begin{lemma}\label{lem:l2-r} If $T$ has a fail edge, then $T\cup Z$ is not an extra skip-child of $T$ for any selection $Z$.
\end{lemma}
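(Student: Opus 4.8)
The plan is to follow the proof of Lemma~\ref{lem:l0-r}, with the three conditions defining a \emph{fail} edge taking over the role played there by the single condition defining a \emph{preceding} edge. Let $vz_h\in T$ be a fail edge and suppose, for contradiction, that $Z$ is a selection for which $T\cup Z$ is an extra skip-child of $T$; then $T\cup Z\in tr(E_i)$ (there is nothing to prove otherwise) and $Q(T\cup Z,i)=T$, and by Lemma~\ref{lem:l0-r} we may assume that $Z$ contains no preceding edge. Since $vz_h\in T\cup Z$ and $T\cup Z\in tr(E_i)$, the set $P_{E_i}(vz_h,T\cup Z)$ is non-empty and, as enlarging an edge set only destroys private neighbours, contained in $P_{E_i}(vz_h,T)$. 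The argument has two parts: first show that $P_{E_i}(vz_h,T\cup Z)=\{y_iz_h\}$, and then run the modified Berge algorithm on $T\cup Z$ to conclude that the skip-parent of $T\cup Z$ does not contain $vz_h$, contradicting $Q(T\cup Z,i)=T$.

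For the first part, write $P_{E_i}(vz_h,T)=P_{E_{i-1}}(vz_h,T)\sqcup B$ with $B$ the set of edges of $B_i$ that are adjacent to $vz_h$ but to no other edge of $T$. By the first fail condition every edge of $P_{E_{i-1}}(vz_h,T)$ is a border edge, hence incident to a free vertex, and since $Z$ is a selection it contains an edge incident to that free vertex; each such edge therefore becomes adjacent to an edge of $Z$ and drops out of $P_{E_i}(vz_h,T\cup Z)$. The one situation to exclude is that such a border edge lies in $Z$ itself, and this is where the absence of preceding edges in $Z$ and the third fail condition (which forces $y_iz_\ell<y_iz_h$ for every $wz_\ell\in P_{E_{i-1}}(vz_h,T)$) are used. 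On the $B_i$ side, $B$ contains $y_iz_h$ by the second fail condition, and at most the edge $x_iz_h$ besides it; but since $T\cup Z$ is an extra skip-child there is, as recalled at the beginning of Section~\ref{sec:4}, an edge $f\ne b_i$ of $T$ incident to $x_i$, and $f$ is adjacent to $x_iz_h$, so $x_iz_h\notin B$ (note also $b_i\notin T$, since $b_i$ would otherwise be adjacent to $y_iz_h\in P_{E_i}(vz_h,T)$). Hence $y_iz_h$ is the only surviving candidate, so $P_{E_i}(vz_h,T\cup Z)=\{y_iz_h\}$.

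For the second part, run the modified Berge algorithm on $T\cup Z$ along the fixed ordering and let $t$ be the index of $y_iz_h$; recall $|E_{i-1}|<t$ since all edges of $E_{i-1}$ precede those of $B_i$. The third fail condition ensures that, for every border edge $wz_\ell\in Z$ adjacent to $vz_h$, the witness $y_iz_\ell$ has index $<t$; replaying the book-keeping of the proof of Lemma~\ref{lem:l0-r} (and invoking Lemmas~\ref{lem:l0}, \ref{lem:l2}, \ref{lem:l3} for the structure of selections) one checks that every edge of $T$ keeps a private neighbour among the first $t$ hyper-edges, so $Q'_t(T\cup Z,|E_i|)\supseteq T$, and consequently $P_{\cF_t}\!\bigl(vz_h,Q'_t(T\cup Z,|E_i|)\bigr)=\{y_iz_h\}$. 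Since $y_iz_h$ is the $t$-th edge, the vertex deleted in passing from level $t$ to level $t-1$ is precisely $vz_h$, whence $vz_h\notin Q'_{t-1}(T\cup Z,|E_i|)$; because an ancestor in the Berge tree never recovers a deleted edge and $|E_{i-1}|\le t-1$, we get $vz_h\notin Q'_{|E_{i-1}|}(T\cup Z,|E_i|)=Q(T\cup Z,i)$, contradicting $Q(T\cup Z,i)=T\ni vz_h$.

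The main obstacle is the book-keeping common to both parts: ruling out that a border edge which is a private neighbour of the fail edge $vz_h$ is itself selected into $Z$, and verifying that $Q'_t(T\cup Z,|E_i|)$ really contains all of $T$ so that the deletion at level $t$ is forced on $vz_h$. Both steps reduce to replaying, with the three fail conditions in place of the single preceding condition, the same delicate case analysis already carried out for Lemma~\ref{lem:l0-r}; the rest is routine manipulation of the reverse-search / Berge machinery set up above.
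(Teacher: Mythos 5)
Your argument is essentially the paper's own proof: establish $P_{E_i}(vz_h,T\cup Z)=\{y_iz_h\}$ from the first two fail conditions and the fact that a selection meets every free vertex, then run the Berge peeling down to level $t$ (the index of $y_iz_h$) and conclude that $vz_h$ is deleted at that step, so the skip-parent of $T\cup Z$ cannot be $T$ -- with the third fail condition doing its real work, as in your second part, by ensuring nothing adjacent to $vz_h$ becomes private again before level $t$ is reached. Your extra book-keeping (ruling out $x_iz_h$, assuming via Lemma~\ref{lem:l0-r} that $Z$ has no preceding edge) is harmless, and the steps you defer to ``replaying Lemma~\ref{lem:l0-r}'' are exactly the points the paper's proof also treats tersely, so the proposal matches the paper's route.
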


\begin{proof} Let $vz_h$ be a fail edge of $T$ and let $Z$ be a selection. Suppose without loss of generality that $T\cup Z\in tr(E_i)$. Since $P_{E_{i-1}}(vz_h,T)\subseteq Bd(T,i)$ and each free
  vertex should be incident to an edge in $Z$ we can conclude that $P_{E_i}(vz_h,T\cup Z)=\{y_iz_h\}$. Now let $t$ be the index of $y_iz_h$ in the ordering of $E$ and let $\cF_t$ be the first $t$
  edges in this ordering. Assume also that $Q'_t(T\cup Z,|E_i|)$ contains all edges of $T$, otherwise $T\cup Z$ is not an extra skip-child of $T$. But, since $P_{\cF_t}(v_jz_{h_j},T\cup
  Z)=P_{E_i}(v_jz_{h_j},T\cup Z)=\{y_iz_h\}$, we can conclude that $Q'_{t-1}(T\cup Z,|E_i|)$ does not contain $v_jz_h$, and then the skip-parent of $T\cup Z$ does not include $v_jz_{h_j}$. Thus,
  $T\cup Z$ is not an extra skip-child of $T$.
\end{proof}

We are now able to characterize exactly those selections $Z$ not intersecting $H_T$ and such that $T\cup Z$ is an extra skip-child of $T$. 

\begin{lemma} \label{lem:good-selection} Suppose that $T$ has neither redundant edge nor fail edge and any free vertex is incident to an edge in $Bd(T,i)$. Then, $T\cup Z$ with $T\cap
  Z=\emptyset$ is an extra skip-child of $T$ including no edge of $H_T$ if and only if $Z$ is a selection including no edge of $X_T\cup X'_T\cup H_T$.
\end{lemma}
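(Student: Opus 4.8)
The plan is to prove the two directions separately, leveraging the lemmas already established. For the "if" direction, suppose $Z$ is a selection disjoint from $X_T\cup X'_T\cup H_T$ (and from $T$). Since $Z\cap X_T=\emptyset$ and $Z\cap H_T=\emptyset$, Lemma \ref{lem:HyperReduit} applies (using that $T$ has no redundant edge), so $T\cup Z\in tr(E_i)$. It remains to argue that $T\cup Z$ is an \emph{extra skip-child} of $T$, i.e.\ that the skip-parent $Q(T\cup Z,i)$ equals $T$. Because every edge of $Z$ is a border edge incident to a free vertex, no edge of $Z$ lies in $\widetilde{N}(x_i)$; combined with the earlier observation that $T\cup Z\in tr(E_i)$ with $Z$ a selection forces $P_{E_i}(e,T\cup Z)=\{vz_h\}$ for each $e=vz_h\in Z$, each edge of $Z$ has its unique private neighbor in $B_i$ (namely $y_iz_h$ or an edge of $\widetilde N(y_i)\cap B_i$), while no edge of $Z$ is itself in $B_i$. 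Running Berge's parent function from $T\cup Z$ down through the levels of $B_i$, one checks that the edges removed are exactly those of $Z$: an edge $e\in T$ is never removed since it retains a private neighbor in $E_{i-1}$ (here is where "no fail edge", $Z\cap X'_T=\emptyset$ and "every free vertex incident to an edge in $Bd(T,i)$" are used — these are precisely the hypotheses ruling out the configurations of Lemmas \ref{lem:l0-r} and \ref{lem:l2-r} that would cause some $e\in T$ to lose all private neighbors outside $B_i$ and be removed instead). Hence $Q(T\cup Z,i)=T$.

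For the "only if" direction, suppose $T\cup Z$ is an extra skip-child of $T$ with $T\cap Z=\emptyset$ and $Z\cap H_T=\emptyset$. Then in particular $T\cup Z\in tr(E_i)$, so by Lemma \ref{lem:l0} $Z$ is a selection. By the discussion preceding Lemma \ref{lem:l3}, $T\cup Z\in tr(E_i)$ forces $Z\cap X_T=\emptyset$ (any edge of $X_T$ would make some $e'\in T$ redundant, contradicting $T\cup Z\in tr(E_i)$ via Lemma \ref{lem:l2}). Finally, $Z\cap X'_T=\emptyset$ follows from the contrapositive of Lemma \ref{lem:l0-r}: if $Z$ contained a preceding edge, then $T\cup Z$ would not be an extra skip-child of $T$. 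Together with the hypothesis $Z\cap H_T=\emptyset$, this gives $Z\cap(X_T\cup X'_T\cup H_T)=\emptyset$, as desired.

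The main obstacle is the "extra skip-child" part of the "if" direction — showing that the skip-parent computation actually strips off precisely $Z$ and nothing from $T$. This requires a careful level-by-level bookkeeping of Berge's parent function over the edges of $B_i$, tracking which private neighbors survive at each step; the nontrivial point is that the three standing hypotheses (no redundant edge, no fail edge, every free vertex covered, plus disjointness from $X_T\cup X'_T$) together guarantee that every $e\in T$ always keeps a private neighbor in $E_{i-1}$ or acquires one of the form $y_iz_h$ at the right moment, so that the descent never deletes an edge of $T$ before it has deleted all of $Z$. I would structure this as: first handle edges $e\in T$ whose private neighbors in $E_{i-1}$ are non-border (trivially survive), then those whose $E_{i-1}$-private neighbors are all border edges — here "no fail edge" and the ordering condition on the $y_iz_h$ ensure that the largest-indexed such $y_iz_h$ is removed from $T\cup Z$ only \emph{after} all of $Z$ has been peeled off, so $e$ is retained in the skip-parent. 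Once no edge of $T$ is ever deleted and every edge of $Z$ eventually is (each losing its sole private neighbor once we pass its level in $B_i$), we conclude $Q(T\cup Z,i)=T$.
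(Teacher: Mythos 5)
Your ``only if'' direction is fine and is essentially the paper's: selection from Lemma \ref{lem:l0}, $Z\cap X_T=\emptyset$ from the discussion around Lemmas \ref{lem:l2} and \ref{lem:l3}, $Z\cap X'_T=\emptyset$ from Lemma \ref{lem:l0-r}, and $Z\cap H_T=\emptyset$ by hypothesis. The gap is in the ``if'' direction, precisely at the point you yourself flag as the main obstacle. Your plan rests on two claims that do not hold as stated. First, you assert that an edge $e=vz_h\in T$ whose private neighbors in $E_{i-1}$ are non-border ``trivially survives''. It does not: a non-border private neighbor of $vz_h$ is incident to $v$ or to $z_h$, and a border edge of $Z$ sharing that endpoint (say $vz_\ell\in Z$, or $z_hz_s\in Z$) dominates it, so at the intermediate levels $\cF_j$ with $|E_{i-1}|\le j<|E_i|$ the edge $vz_h$ may have no private neighbor among $\cF_j$ other than $y_iz_h$, and Berge's parent function would then delete it exactly at the level of $y_iz_h$. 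This is the heart of the lemma: the paper handles it by taking the first $T$-edge removed during the descent and splitting into cases --- if all of $P_{E_{i-1}}(vz_h,T)$ are border edges, ``no fail edge'' plus the edge ordering restores a private neighbor $wz_\ell$ before level $j$; if the killed non-border private neighbors all lie at $v$, the exclusion of preceding edges ($Z\cap X'_T=\emptyset$) gives $y_iz_h<y_iz_\ell$ and the offending $Z$-edges are already stripped at level $j+1$; and if non-border private neighbors at \emph{both} endpoints $v$ and $z_h$ are killed, the two responsible $Z$-edges either go to distinct free vertices --- forcing an $H$-pattern and contradicting $Z\cap H_T=\emptyset$ --- or to the same free vertex, contradicting that $Z$ is a selection. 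Your sketch never uses $Z\cap H_T=\emptyset$ in the skip-parent computation at all (you only use it to invoke Lemma \ref{lem:HyperReduit}), yet this exclusion is exactly what tames the configuration that makes the unrestricted problem \np-complete (Theorem \ref{thm:IMT}); ``ruling out the configurations of Lemmas \ref{lem:l0-r} and \ref{lem:l2-r}'' is a necessary-condition statement and does not by itself yield sufficiency.

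Second, the invariant you propose for the bookkeeping --- ``the descent never deletes an edge of $T$ before it has deleted all of $Z$'' --- is not the right one and is not what happens: each $Z$-edge $wz_\ell$ is deleted when the level passes the index of $y_iz_\ell$, so deletions of $Z$-edges are interleaved with the levels $y_iz_h$ at which $T$-edges are at risk, and which event comes first depends on the relative order of $y_iz_\ell$ and $y_iz_h$. The correct argument (as in the paper) is a contradiction on the \emph{first} edge of $T\setminus Q(T\cup Z,i)$ removed, using at that specific level that some private neighbor of it has been restored or was never destroyed, via the three mechanisms above. Without this case analysis --- in particular without the $H$-pattern case --- your proof of $Q(T\cup Z,i)=T$ does not go through.
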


\begin{proof} The only if part is clear from lemmas \ref{lem:l0}, \ref{lem:HyperReduit}, \ref{lem:l0-r} and \ref{lem:l2-r}.  Let us now prove the if part.  Suppose that we have a selection $Z$
  including no edge in $X_T\cup X'_T\cup H_T$. By Lemmas \ref{lem:l3} and \ref{lem:HyperReduit} it holds that $T\cup Z\in tr(E_i)$.  Suppose now that $Q(T\cup Z, i)\ne T$.  Then, let us consider the
  computation of $Q(T\cup Z, i)$ in $E_i$: we compute $Q'(T\cup Z, |E_i|)$, $Q'(Q'(T\cup Z, |E_i|), |E_i|-1)$, and so on.  Let $F$ be the set of edges not in $B_i$ incident to the neighbors of $y_i$
  in $V_i$, \ie $F:= \inc{N(y_i)}\setminus \inc{y_i}$. First notice that $T\setminus F= Q(T\cup Z,i)\setminus F$. So, $T$ and $Q(T\cup Z,i)$ can differ only on edges in $F$. So, let $vz_h$ be the
  first edge removed among $T\setminus Q(T\cup Z, i)$ in this operation sequence, \ie, $Q'_{j+1}(T\cup Z, |E_i|)$ includes all edges in $T\setminus Q(T\cup Z, i)$, but $Q'_j(T\cup Z, |E_i|)$ does not
  include $vz_h$. Let $\cF_j$ be the first $j$ edges in our ordering.  Notice that all edges in $E_{i-1}$ are in $\cF_j$. Then, we can see that $P_{E_i}(vz_h, T\cup Z) = \{ y_iz_h\}$.  This implies
  that any private neighbor in $P_{E_{i-1}}(vz_h, T)$ is dominated by some edges in $Z$, and no edge in $(T\setminus \{vz_h\})\cup Z$ is adjacent to $z_h$.  We further see that if there is a private
  neighbor $uv\in P_{E_{i-1}}(vz_h, T)$ that is not a border edge, then no border edge is incident to $u$. Indeed, $u$ is not a free vertex and is necessarily in $V_{i-1}$ and if there is a border
  edge $uz_\ell$ this edge should be in $E_{i-1}$ and since it should be dominated by $T$ and $vu\in P_{E_{i-1}}(vz_h, T)$, there would exist an edge in $T$ incident to $z_\ell$ contradicting that
  $uz_\ell$ is a border edge. Similarly if there is a non border $uz_h\in P_{E_{i-1}}(vz_h, T)$, then no border edge is incident to $u$.

  Suppose that all edges in $P_{E_{i-1}}(vz_h, T)$ are border edges.  Since $vz_h$ is not a fail edge, there exists an edge $wz_\ell\in P_{E_{i-1}}(vz_h, T)$ satisfying $y_iz_h < y_iz_\ell$.  This
  implies that the edge $wz_\ell\in P_{\cF_j}(vz_h, Q'_{j+1}(T\cup Z, |E_i|))$, and then $vz_h$ should be in $Q'_j(T\cup Z,|E_i|)$, otherwise $wz_\ell$ would not be dominated by $Q'_j(T\cup
  Z,|E_i|)$, thus yielding a contradiction.

  Suppose now that there is a non-border edge $vu$ in $P_{E_{i-1}}(vz_h, T)$.  We note that $u$ can be $z_h$ so that $vu = vz_h$.  Since $vz_h$ is not in $Q'_j(T\cup Z, |E_i|)$, there should be a
  border edge in $Z$ adjacent to $vz_h$.  We can observe that $u$ is incident to no edge in $T\setminus \{vz_h\}$, thus any border edge adjacent to $uv$ is incident to $v$. If
  $P_{E_{i-1}}(vz_h,T)\subseteq \inc{v}$, then since $T$ has no preceding edge any border edge $vz_\ell\in Z$ satisfies that $y_iz_h < y_iz_\ell$. This implies that $Q'_{j+1}(T\cup Z, |E_i|)$ includes
  no such border edge, and $uv$ is a private neighbor of $vz_h$ in $P_{\cF_j}(vz_h, Q'_{j+1}(T\cup Z, |E_i|))$.  This implies that $vz_h$ is included in $Q'_j(T\cup Z, |E_i|)$, yielding a
  contradiction. So, there is a non border edge $z_hw\in P_{E_{i-1}}(vz_h,T)$. Let $z_s$ and $z_p$ be free vertices adjacent respectively to $z_h$ and $v$ and such that $z_hz_s$ and $vz_p$ are in
  $Z$. If $z_s\ne z_p$, then $\{z_s,z_h,z_p,v\}$ would form an $H$. So there is at most one free vertex $z_s$ such that $vz_s$ and $z_hz_s$ are in $Z$. If such a $z_s$ exists, then one of $z_hz_s$ and
  $vz_s$ is not in $Z$. And then in this case either $wz_h$ or $vu$ is in $P_{\cF_j}(vz_h, Q'_{j+1}(T\cup Z, |E_i|))$, contradicting that $vz_h$ is not in $Q'_j(T\cup Z,|E_i|)$.  If $z_h$ is not adjacent
  to a border edge, then $wz_h\in P_{\cF_j}(vz_h,Q'_{j+1}(T\cup Z,|E_i|))$, and then again $vz_h$ would be in $Q'_j(T\cup Z,|E_i|)$.

  From the discussion, we have that $T\setminus Q(T\cup Z, i) = \emptyset$.  Since $Q(T\cup Z, i)$ and $T$ are both minimal in $tr(E_{i-1})$, we have $Q(T\cup Z, i)=T$.
\end{proof}

As a corollary we have the following.

\begin{prop}\label{prop:enum-skip1} One can enumerate with polynomial delay and space all the extra skip-children of $T$ that do not contain edges of $H_T$. 
\end{prop}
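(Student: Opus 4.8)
The plan is to derive Proposition~\ref{prop:enum-skip1} directly from the structural characterization of Lemma~\ref{lem:good-selection} together with the abstract enumeration engine of Propositions~\ref{prop:poly-space}--\ref{prop:poly-delai}. First I would dispose of the degenerate cases: if $T$ has an isolated free vertex, or a redundant edge, or a fail edge, or if some free vertex is incident to no edge of $Bd(T,i)$, then by Lemmas~\ref{lem:l2}, \ref{lem:l3}, \ref{lem:l2-r} (and the earlier remark on isolated free vertices) there is no extra skip-child of $T$ at all, so the enumeration outputs nothing and trivially satisfies the delay bound. These conditions are all testable in polynomial time in $\size{G}$, so they cost only a polynomial pre-processing.

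In the remaining case, Lemma~\ref{lem:good-selection} tells us that the extra skip-children of $T$ avoiding $H_T$ are \emph{exactly} the sets $T\cup Z$ where $Z$ ranges over selections contained in $Bd(T,i)\setminus(X_T\cup X'_T\cup H_T)$. So the task reduces to enumerating such selections with polynomial delay and polynomial space. Recall a selection picks exactly one border edge incident to each free vertex, and that each border edge is incident to exactly one free vertex (the remark before Lemma~\ref{lem:l0}); hence the allowed border edges in $Bd(T,i)\setminus(X_T\cup X'_T\cup H_T)$ are partitioned by the free vertex they touch, and a valid $Z$ is obtained by choosing, independently, one edge from each block. This is just the enumeration of a Cartesian product: I would order the free vertices $z_1,\ldots,z_r$, let $L_h$ be the set of edges of $Bd(T,i)\setminus(X_T\cup X'_T\cup H_T)$ incident to $z_h$, check that every $L_h$ is nonempty (otherwise output nothing, by Lemma~\ref{lem:good-selection} again), and then enumerate all tuples $(e_1,\ldots,e_r)\in L_1\times\cdots\times L_r$ in lexicographic order. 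Each successive tuple is produced in $\cO(r)$ time by incrementing the rightmost position that can advance, so the delay is polynomial and the space is just the current tuple, i.e.\ polynomial. Equivalently one can phrase this as an instance of Proposition~\ref{prop:poly-delai} with $\cX$ the set of partial selections (choices made on a prefix of the free vertices), parent obtained by deleting the last choice, and $\cZ$ the full selections — the relation is plainly acyclic and irredundant because every partial choice extends to a full one since all $L_h\ne\emptyset$.

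The only points needing care are the preprocessing steps, and I would make explicit that each of the sets $Bd(T,i)$, $X_T$, $X'_T$, and $H_T$, as well as the set of free and isolated vertices, can be computed from $G$, the maximal matching, and $T$ in time polynomial in $\size{G}$ — this follows straightforwardly from their definitions, since each membership condition quantifies over edges and vertices of $G$ and involves only private-neighbor tests, which are polynomial. Given this, the characterization of Lemma~\ref{lem:good-selection} converts the enumeration of extra skip-children avoiding $H_T$ into the product enumeration above, and the proposition follows. I do not expect a genuine obstacle here: the substance was already absorbed into Lemma~\ref{lem:good-selection} and the IMT hardness discussion; the remaining work is the routine observation that, once the ``bad'' edges $X_T\cup X'_T\cup H_T$ are removed, the surviving selections form a simple product structure that any reverse-search scheme enumerates with polynomial delay and polynomial space.
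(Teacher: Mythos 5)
Your proof is correct, and it follows the same skeleton as the paper's: first eliminate the degenerate situations (redundant edge, fail edge, free vertex with no usable border edge) via Lemmas~\ref{lem:l2}, \ref{lem:l3} and \ref{lem:l2-r}, then invoke Lemma~\ref{lem:good-selection} to reduce the task to enumerating the selections contained in $Bd(T,i)\setminus(X_T\cup X'_T\cup H_T)$. Where you diverge is the final enumeration step. The paper disposes of it by remarking that these selections are the minimal transversals of a hypergraph of degree at most $2$ and citing the polynomial-delay, polynomial-space dualization algorithm of \cite{EGM03} for such hypergraphs. You instead exploit the observation (made in the paper just before Lemma~\ref{lem:l0}) that every border edge is incident to exactly one free vertex, so the admissible border edges are partitioned by their free vertex and the selections avoiding $X_T\cup X'_T\cup H_T$ are exactly the elements of the Cartesian product $L_1\times\cdots\times L_r$; distinct tuples give distinct sets $Z=(T\cup Z)\setminus T$, so lexicographic product enumeration yields the children with polynomial delay and space, with the empty-$L_h$ case correctly dismissed through the ``only if'' direction of Lemma~\ref{lem:good-selection}. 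Your route is more elementary and self-contained (no appeal to \cite{EGM03}, and it makes the product structure explicit), while the paper's formulation is more robust in that it would still apply if the residual constraints did not decompose into disjoint blocks; since all the substance is indeed in Lemma~\ref{lem:good-selection}, both arguments are sound.
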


\begin{proof} If $T$ has redundant edges or fail edges or has a free vertex not incident to an edge in $Bd(T,i)\setminus X_T$, then by Lemmas \ref{lem:l2}, \ref{lem:l3} and \ref{lem:l2-r} we can
  conclude that $T$ has no extra skip-child. Since we can compute $X_T$ in polynomial time and check in polynomial time whether an edge is redundant or is a fail edge, this step can be done in
  polynomial time. So, assume $T$ has no redundant edge nor fail edges and every free vertex is incident to an edge in $Bd(T,i)\setminus X_T$. By Lemma \ref{lem:good-selection} by removing all edges
  in $H_T\cup X_{T}\cup X'_T$, any selection $Z$ is such that $T\cup Z$ is a skip-child of $T$. One easily checks that the enumeration of these selections can be reduced to the enumeration of the
  minimal transversals of a hypergraph of degree at most $2$, and in these hypergraphs minimal transversals can be enumerated with polynomial delay and polynomial space \cite{EGM03}.
\end{proof}

%%%%%%%%%%%%%%%%%%%%%%%%%%%%%%%%%%%%%%%%%%%%%%%%%%%%%%%%%%%%%%%%%%%%%%%
\subsection{Dealing with the Presence of $H$-Patterns}\label{sec:5}

As we saw in Theorem \ref{thm:IMT}, it is hard to enumerate all extra skip-children having some edges in $H$-patterns from a given transversal $T\in tr(E_{i-1})$. Let us call these children {\em
  $H$-children}.  We approach this difficulty by introducing a new parent-child relation among $H$-children, and enumerate them by traversing the forest induced by the new relation.  In this way, we
now do not follow the skip-parent skip-child relation for $H$-children.  However, the root of each tree in the induced forest is a transversal obtained with the skip-child skip-parent relation. Let us
be more precise now.  For two sets $S$ and $S'$ of edges we write $S<_{lex} S'$ if $\min (S\Delta S') \in S$, called lexicographical ordering. 

Hereafter, we consider an extra skip-child $T'=T\cup Z$ of $T\in tr(E_{i-1})$ such that $T'\cap H_T \ne \emptyset$.  Let $H^*(T'):=\{v_hz_h, v_{\ell}z_{\ell}, v_hv_\ell\}$ be the lexicographically
minimum $H$-pattern among all $H$-patterns of $T$ that includes an edge of $Z$.  Without loss of generality, we assume that $v_{\ell}z_{\ell}$ is in $Z$.  Let $uz_h$ be the edge in $Z$ incident to
$z_h$. Notice that such an edge exists because $z_h$ is a free vertex. Then, we define the {\em slide-parent} $Q^*(T', i)$ of $T'$ by $T'\cup \{ v_hz_h \} \setminus \{ uz_h, v_hv_\ell\}$.

\begin{lemma}\label{lem:slide-uniq} The slide-parent of $T'$ is well-defined and is a member of $tr(E_i)$.
\end{lemma}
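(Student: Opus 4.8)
The plan is to check three things about $T'':=Q^{*}(T',i)=(T'\setminus\{uz_h,v_hv_\ell\})\cup\{v_hz_h\}$: that every choice in its definition is forced (well-definedness), that it is a transversal of $E_i$, and that it is minimal. Throughout I use that $T=Q(T',i)\in tr(E_{i-1})$, that $T'=T\cup Z$ with $Z\cap T=\emptyset$ and $Z$ a selection (Lemma~\ref{lem:l0}), and that $T'\in tr(E_i)$. For well-definedness: no edge of $T$ is incident to a free vertex, hence no edge of $T$ is a border edge, so $T\cap H_T=\emptyset$ and $T'\cap H_T=Z\cap H_T\neq\emptyset$; thus some $H$-pattern of $T$ contains an edge of $Z$, and $H^{*}(T')$ is the unique lexicographically least such pattern, hence unambiguous. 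Among its three edges only the two border edges $v_hz_h,v_\ell z_\ell$ can lie in $Z$ (the edge $v_hv_\ell$ lies in $T$ and $Z\cap T=\emptyset$), and not both do, since otherwise every private neighbour of $v_hv_\ell$—each incident to $z_h$ or $z_\ell$—would be adjacent to a $Z$-edge, forcing $P_{E_i}(v_hv_\ell,T')=\emptyset$ and contradicting $T'\in tr(E_i)$; so exactly one is in $Z$, and relabelling it $v_\ell z_\ell$ justifies the ``without loss of generality''. Then $v_hz_h\notin Z$ and, $z_h$ being free, $v_hz_h\notin T$, so $v_hz_h\notin T'$; the edge $uz_h$ is the unique edge of $Z$ incident to the free vertex $z_h$, so $u$ and $uz_h$ are determined, and $u\neq v_h$ because $uz_h\in Z$ but $v_hz_h\notin Z$. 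Finally $uz_h\in Z\subseteq T'$, $v_hv_\ell\in T\subseteq T'$, and $uz_h,v_hv_\ell,v_hz_h$ are pairwise distinct, so $T''$ is well defined with $|T''|=|T'|-1$.

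For transversality it suffices, since $T'\in tr(E_i)$, to re-dominate every edge of $E_i$ dominated in $T'$ only by edges of $\{uz_h,v_hv_\ell\}$. An edge adjacent to $v_hv_\ell$ is incident to $v_h$, hence adjacent to the new edge $v_hz_h\in T''$, or incident to $v_\ell$, hence adjacent to $v_\ell z_\ell\in T''$. An edge adjacent to $uz_h$ (including $uz_h$ itself) is incident to $z_h$, hence adjacent to $v_hz_h$, or incident to $u$; since $uz_h\in Bd(T,i)\subseteq E_{i-1}$ is incident to the free vertex $z_h$, the vertex $u\in V_{i-1}$ is incident to some $uw\in T$, and if $u\neq v_\ell$ then $uw\notin\{uz_h,v_hv_\ell\}$, so $uw\in T''$, while if $u=v_\ell$ then $v_\ell z_\ell\in T''$. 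In all cases the edge is re-dominated, so $T''$ is a transversal of $E_i$.

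For minimality, two structural facts do the work, and establishing them is the only delicate point. First, $v_hv_\ell$ is the \emph{only} edge of $T$ incident to $v_h$: the pattern $H^{*}(T')$ supplies a non-border edge $e_h\in E_{i-1}\setminus T$ incident to $v_h$ with $N[e_h]\cap T=\{v_hv_\ell\}$, and any other $T$-edge at $v_h$ would be adjacent to $e_h$. Second, $Z$ contains no border edge incident to $v_h$: such an edge $v_hz'$ would have $z'$ free with $z'\neq z_h$ (as $v_hz_h\notin Z$) and $z'\neq z_\ell$ (one $Z$-edge per free vertex), so $\{z',v_h,z_\ell,v_\ell\}$ would be an $H$-pattern with both of its border edges in $Z$, again contradicting $T'\in tr(E_i)$. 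Granting these, $e_h$ is adjacent in $T'$ only to $v_hv_\ell$ (a $Z$-edge adjacent to $e_h$ would be incident to $v_h$, impossible, or to the other endpoint of $e_h$, ruled out exactly as in the proof of Lemma~\ref{lem:good-selection}), hence adjacent in $T''$ only to $v_hz_h$, so $e_h\in P_{E_i}(v_hz_h,T'')$; and for any other edge $e\in T''$, i.e.\ $e\in T'\setminus\{uz_h,v_hv_\ell\}$, the edge $e$ is incident to neither $v_h$ (the only $T$-edge there is removed, and $Z$ has no border edge there) nor $z_h$ ($z_h$ is free and $uz_h$ is removed), so any private neighbour $g$ of $e$ with respect to $T'$ is itself incident to neither $v_h$ nor $z_h$ (otherwise $g$ would be adjacent in $T'$ to $v_hv_\ell$ or $uz_h$ besides $e$), hence not adjacent to $v_hz_h$, and $g$ remains private to $e$ with respect to $T''$. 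Thus every edge of $T''$ has a private neighbour, so $T''\in tr(E_i)$. The main obstacle is precisely extracting these two facts—that $v_h$ carries exactly one edge of $T$ and no edge of $Z$—from the definition of an $H$-pattern together with $T'\in tr(E_i)$; once they are in place the ``slide'' merely transfers the $v_h$-side private neighbour of $v_hv_\ell$ to the inserted edge and disturbs nothing else.
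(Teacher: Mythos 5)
Your proof is correct and follows essentially the same route as the paper's: well-definedness from the selection property at the free vertex $z_h$, domination of $E_i$ via the $T$-edge at $u$ together with $v_hz_h$ and $v_\ell z_\ell$ covering $N[v_hv_\ell]$, and minimality by showing private neighbours of the untouched edges survive and exhibiting a private neighbour for $v_hz_h$ — you merely fill in details the paper asserts (that $v_hv_\ell$ is the only $T'$-edge at $v_h$) and use $e_h$ rather than $v_hz_h$ itself as the witness. One small slip: private neighbours of $v_hv_\ell$ are incident to $v_h$ or $v_\ell$ (hence adjacent to $v_hz_h$ or $v_\ell z_\ell$), not ``incident to $z_h$ or $z_\ell$'', but the conclusion you draw from it is the correct one.
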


\begin{proof} Since $z_h$ is a free vertex for $T$, $Z$ includes exactly one edge in $Z$, thus $uz_h$ is uniquely determined, and thus the slide-parent is uniquely defined.  Since $uz_h$ is a border
  edge, either $u\not\in V_{i-1}$ or $u$ is incident to an edge of $T$.  This together with that $v_hz_h$ and $v_{\ell}z_\ell$ dominate all edges in $N[v_hv_\ell]$ leads that $Q^*(T', i)$ dominates
  all edges in $E_i$.

  By the addition of $v_hz_h$ to $T'$, no edge in $T'\setminus \{ uz_h, v_hv_\ell\}$ loses its private neighbor.  The edge $v_hz_h$ is adjacent to no edge in $T'\setminus \{ uz_h, v_hv_\ell\}$, and
  then $v_hz_h\in P_{E_i}(v_hz_h,Q^*(T',i)$.  These imply that $Q^*(T', i)$ is a member of $tr(E_i)$.
\end{proof}

The slide-parent of $T$ has less edges than $T$, thus the (slide-parent)-(slide-child) relationship is acyclic, and for each $T'\in tr(E_i)$, there is an ancestor $T''\in tr(E_i)$ in the (slide-parent)-(slide-child)
relation such that the skip-parent of $T''$ has no $H$-pattern.  Similar to the depth-first search versions of Berge's algorithm \cite{KavvadiasS05,MurakamiU14}, we will traverse the (slide-parent)-(slide-child)
relation to enumerate all transversals including $H$-pattern edges. The following follows from the definition of slide-parent. 

\begin{proposition}\label{prop:slide-child} Any slide-child $T'$ of $T''$ is obtained from $T''$ by adding two edges and remove one edge.
\end{proposition}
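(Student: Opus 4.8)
The plan is to unwind the definition of the slide-parent and observe that each slide-child is obtained by reversing one "slide" operation. Recall that if $T'$ is a slide-child of $T''$, then $T''=Q^*(T',i)$, and by definition $Q^*(T',i)=T'\cup\{v_hz_h\}\setminus\{uz_h,v_hv_\ell\}$ where $H^*(T')=\{v_hz_h,v_\ell z_\ell,v_hv_\ell\}$ is the lexicographically minimum $H$-pattern of the skip-parent $T$ of $T'$ that contains an edge of $Z=T'\setminus T$, with $v_\ell z_\ell\in Z$, and $uz_h$ is the unique edge of $Z$ incident to the free vertex $z_h$.

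From this formula I would read off the relationship in the reverse direction: $T'=T''\cup\{uz_h,v_hv_\ell\}\setminus\{v_hz_h\}$. So $T'$ is obtained from $T''$ by adding the two edges $uz_h$ and $v_hv_\ell$ and removing the single edge $v_hz_h$. The only points that need checking are (i) that $v_hz_h\in T''$ so that the removal is non-trivial, and (ii) that $uz_h,v_hv_\ell\notin T''$ so that the additions are genuine; both are immediate from the definition of $Q^*$: $v_hz_h$ is explicitly added to form $T''$ and, since $u\neq v_h$ (the edge $uz_h$ of $Z$ is incident to the free vertex $z_h$ while $v_hz_h$ is the corresponding edge of the $H$-pattern — and if $u=v_h$ the two edges coincide, in which case $Q^*(T',i)=T'\setminus\{v_hv_\ell\}$ and one argues analogously), the edges $uz_h$ and $v_hv_\ell$ were removed, hence are absent from $T''$. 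Thus $|T'|=|T''|+1$ and $T'\setminus T''=\{uz_h,v_hv_\ell\}$, $T''\setminus T'=\{v_hz_h\}$, which is exactly the claimed statement.

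I expect the main (and only real) obstacle is a bookkeeping subtlety: one must make sure the three edges $v_hz_h$, $uz_h$, $v_hv_\ell$ are genuinely distinct and that $uz_h\neq v_hz_h$, i.e. that the operation really changes the transversal by three distinct edges rather than collapsing. This is guaranteed because $v_hv_\ell\in T$ is an edge of the matching-side structure (it lies in $T\subseteq tr(E_{i-1})$, so it is not a border edge incident to a free vertex), whereas $uz_h$ and $v_hz_h$ are border edges incident to the free vertex $z_h$, so $v_hv_\ell$ differs from both; and $uz_h\neq v_hz_h$ precisely when $u\neq v_h$, the generic case treated above. Since the statement as phrased ("adding two edges and removing one edge") already tolerates the degenerate case by reading it as an upper bound on the symmetric difference, no further case analysis is needed, and the proof is essentially one line after substituting the definition.

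\begin{proof}
By definition of the slide-parent, if $T'$ is a slide-child of $T''$ then $T''=Q^*(T',i)=T'\cup\{v_hz_h\}\setminus\{uz_h,v_hv_\ell\}$, where $\{v_hz_h,v_\ell z_\ell,v_hv_\ell\}=H^*(T')$ and $uz_h$ is the unique edge of $T'\setminus T$ incident to the free vertex $z_h$. Since $v_hv_\ell\in T\subseteq T'$ and $uz_h\in T'$, and since $v_hz_h\notin T'$ (as $z_h$ is a free vertex for $T$, the only edge of $T'$ incident to $z_h$ is $uz_h$, and $u\neq v_h$ in the non-degenerate case; if $u=v_h$ then $Q^*(T',i)=T'\setminus\{v_hv_\ell\}$ and $T'$ is obtained by adding a single edge), solving for $T'$ yields $T'=T''\cup\{uz_h,v_hv_\ell\}\setminus\{v_hz_h\}$. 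As $v_hv_\ell$ lies in $T$ and is therefore not a border edge, while $uz_h$ and $v_hz_h$ are border edges incident to $z_h$, the three edges are pairwise distinct. Hence $T'$ is obtained from $T''$ by adding the two edges $uz_h,v_hv_\ell$ and removing the single edge $v_hz_h$.
\end{proof}
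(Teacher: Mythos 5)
Your proof is correct and takes essentially the same route as the paper, which states the proposition as an immediate consequence of the definition of the slide-parent $Q^*(T',i)=T'\cup\{v_hz_h\}\setminus\{uz_h,v_hv_\ell\}$ --- exactly the inversion you carry out, together with the (needed) check that the three edges are present/absent as required and pairwise distinct. Note only that your degenerate case $u=v_h$ is vacuous: since $T'=T\cup Z\in tr(E_i)$, the edge $v_hv_\ell$ must keep a private neighbor, so $Z$ cannot contain both border edges $v_hz_h$ and $v_\ell z_\ell$ of the $H$-pattern (this is the paper's motivating observation about $H$-patterns), hence $u\neq v_h$ always and no ``upper bound'' reading of the statement is required.
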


The computation of the slide-parent of any $T'\in tr(E_i)$ including edges of $H$-patterns can be easily done in polynomial time: compute its skip-parent $T$ in polynomial time, choose $H^*(T)$ and
then compute its slide-parent in polynomial time as described above.  Proposition \ref{prop:slide-child} shows that there are at most $m^3$ candidates for slide-children, thus the enumeration of
slide-children can be done with polynomial delay and polynomial space.

\begin{lemma} For any $T'\in tr(E_i)$, all its slide-children can be enumerated  with polynomial delay and polynomial space.
\end{lemma}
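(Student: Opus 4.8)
The plan is to turn the structural bound of Proposition~\ref{prop:slide-child} into a brute-force-over-candidates procedure. By that proposition, every slide-child $T''$ of $T'$ is obtained from $T'$ by deleting one edge of $T'$ and inserting two edges of $E$; hence there are at most $|T'|\cdot|E|^2\le |E|^3$ candidate sets, and they can be generated one after another, in polynomial time and polynomial space, simply by iterating over all triples $(f,e_1,e_2)$ with $f\in T'$ and $e_1,e_2\in E$ and forming $T'' := (T'\setminus\{f\})\cup\{e_1,e_2\}$.

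For each candidate $T''$ it then remains to decide whether it is genuinely a slide-child of $T'$, that is, whether $T''\in tr(E_i)$ and whether $Q^*(T'',i)$ is defined and equal to $T'$. The membership test $T''\in tr(E_i)$ amounts to checking that $T''$ is a transversal of $E_i$ and that $P_{E_i}(e,T'')\ne\emptyset$ for every $e\in T''$, which is polynomial. For the parent test we run the polynomial-time slide-parent computation described just above the statement: first compute the skip-parent $T:=Q(T'',i)$, which is polynomial because it is the composition of the polynomial-time Berge parent $Q'$ over the levels $|E_i|,|E_i|-1,\dots,|E_{i-1}|$; then check that $T''$ is an extra skip-child of $T$ with $T''\cap H_T\ne\emptyset$ (compute $Bd(T,i)$ and $H_T$, and verify the relevant irredundancy conditions — all polynomial), so that $Q^*$ is defined; then find the lexicographically minimum $H$-pattern $H^*(T'')$ meeting $T''\setminus T$ by scanning over the $H$-patterns of $T$ (at most $|E|^3$ candidate edge triples), read off the edges $v_hz_h$, $uz_h$, $v_hv_\ell$, form $T''\cup\{v_hz_h\}\setminus\{uz_h,v_hv_\ell\}$, and test equality with $T'$. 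All of these steps run in time and space polynomial in $\size{G}$, so the whole decision for a single candidate is polynomial.

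Combining the two parts, the algorithm scans the at most $|E|^3$ candidates in a fixed order and outputs exactly those that pass both tests. Between two consecutive outputs (and before the first one, and after the last one) at most $|E|^3$ candidates are examined, each in time polynomial in $\size{G}$, which bounds the delay by a polynomial; at any moment only $T'$, the current candidate, and the scratch data of the decision procedure are stored, all of polynomial size, so the space is polynomial. Since $\size{\cH(G)}$ is polynomially bounded in $\size{G}$, this gives the claim.

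There is no substantial mathematical difficulty here: the lemma is a bookkeeping corollary of Proposition~\ref{prop:slide-child} together with the already-established polynomial-time computability of $Q'$, $Q$, and of the slide-parent $Q^*$. The only point deserving a little care is confirming that \emph{every} ingredient of the slide-parent computation — the iterated Berge parent over all intermediate levels, the detection of redundant and fail edges, the determination of $H_T$, and the selection of the lexicographically minimum $H$-pattern — is polynomial-time and polynomial-space; this has been argued in the preceding lemmas, so the proof itself is short.
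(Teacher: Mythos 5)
Your proposal is correct and follows essentially the same route as the paper: the paper likewise combines Proposition~\ref{prop:slide-child} (at most $m^3$ candidate sets obtained by removing one edge and adding two) with the polynomial-time computability of the slide-parent (via the skip-parent and the lexicographically minimum $H$-pattern) to test each candidate, yielding polynomial delay and polynomial space. Your write-up merely spells out the membership and parent-equality checks in more detail than the paper's brief justification.
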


%%%%%%%%%%%%%%%%%%%%%%%%%%%%%%%%%%%%%%%%%%%%%%%%%%%%%%%
\section{Summary} \label{sec:6}

We can now summarize the steps of the algorithm. 

\begin{itemize}
\item all transversals in $tr(E_1)$ can be enumerated with polynomial delay and 
 polynomial space, since they include at most two edges from $N[b_1]$.
\item In Section \ref{sec:4} (second paragraph), we have explained how to enumerate all non-extra skip-children
with  polynomial delay and polynomial space
\item By Proposition \ref{prop:enum-skip1} all the extra skip-children not including any edges of $H$-patterns can be enumerated with polynomial delay and polynomial space. 

\item Let $\cT$ be the set of all extra skip-children including some edges from $H$-patterns of their skip-parent.  We proved in Section \ref{sec:5} how to enumerate all the members of $\cT$ with
  polynomial delay and polynomial space, by traversing the (slide-parent)-(slide-child) relation, \ie, recursively listing all the slide-children of the current visiting transversal.

\item Therefore, by executing these three enumeration algorithms for each
 transversal in $T\in tr(E_{i-1})$, we can generate all the members in 
 $tr(E_i)$ with polynomial delay and polynomial space.
\end{itemize}

All these show that the conditions of Proposition \ref{prop:strategy} are satisfied. And thus we can state our main result.

\begin{theorem}\label{thm:main}
All edge minimal dominating sets in a graph $G$ can be enumerated with
 polynomial delay and polynomial space.
\end{theorem}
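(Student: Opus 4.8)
The plan is to check that the results of Sections~\ref{sec:4} and~\ref{sec:5} together meet the hypotheses of Proposition~\ref{prop:strategy}. Since $V_k=V$ we have $E_k=E$, so by Proposition~\ref{prop:ednt} the minimal edge dominating sets of $G$ are exactly the members of $tr(E_k)$. By Proposition~\ref{prop:strategy} it therefore suffices to show that for each $1\leq i\leq k$ and each $T\in tr(E_{i-1})$, all skip-children of $T$ (which lie in $tr(E_i)$) can be listed with polynomial delay and polynomial space; for $i=1$ this is the enumeration of $tr(E_1)$, whose members, by maximality of the matching, use only edges of $N[b_1]=\inc{x_1}\cup\inc{y_1}$ and contain at most two of them, so there are $\cO(m^2)$ candidates, each testable in polynomial time.

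For the inductive step I would partition the skip-children of a fixed $T\in tr(E_{i-1})$ into three classes. First, the \emph{non-extra} skip-children, those $T'$ with $T'\setminus T\subseteq B_i$: they are obtained by adding to $T$ one or two edges of $N[b_i]$, giving $\cO(m^2)$ candidates, each checkable in polynomial time by recomputing the skip-parent. Second, the extra skip-children that contain no edge of $H_T$: these are handled directly by Proposition~\ref{prop:enum-skip1}, whose proof reduces them to the minimal transversals of a hypergraph of degree at most two, enumerable with polynomial delay and polynomial space. Third, the \emph{$H$-children}, the extra skip-children using an edge of $H_T$; by Theorem~\ref{thm:IMT} these cannot be produced directly from $T$ in the skip-parent relation, and this is the case requiring the extra work of Section~\ref{sec:5}.

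For the $H$-children I would follow Section~\ref{sec:5} and, on those transversals only, replace the skip-parent relation by the (slide-parent)--(slide-child) relation $Q^*$ inside $tr(E_i)$. By Lemma~\ref{lem:slide-uniq} the slide-parent of an $H$-child is well-defined and again lies in $tr(E_i)$; since it has strictly fewer edges than its slide-child, iterating $Q^*$ reaches, after polynomially many steps, a member of $tr(E_i)$ that is not an $H$-child, hence one already generated in the non-extra or non-$H$ extra case. By Proposition~\ref{prop:slide-child} each slide-child is obtained by adding two edges and deleting one, so there are at most $m^3$ candidates per transversal, each verifiable in polynomial time; thus slide-children are enumerable with polynomial delay and polynomial space, and running {\sf ReverseSearch} on the forest defined by $Q^*$, rooted at the non-$H$-children of $tr(E_i)$, lists all $H$-children within the same bounds. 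Combining the three classes, every $T\in tr(E_{i-1})$ has all of its skip-children produced with polynomial delay and polynomial space, so Proposition~\ref{prop:strategy} applies and the theorem follows.

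The step I expect to be the main obstacle is the final gluing. One must ensure that the composite parent relation on $\bigcup_{1\leq i\leq k}tr(E_i)$ --- the skip-parent $Q$ on the transversals that are not $H$-children and the slide-parent $Q^*$ on the $H$-children --- is simultaneously acyclic and irredundant, so that Proposition~\ref{prop:poly-delai} genuinely yields a delay polynomial in $\size{G}$ and not merely output-polynomiality. Acyclicity is straightforward ($Q^*$ strictly decreases the number of edges within a level and $Q$ strictly decreases the level index $i$). Irredundancy requires Lemma~\ref{lem:irredundant} at the skip-levels, together with the observation that every $H$-child slides down to a non-$H$-child which is itself a skip-child of some member of $tr(E_{i-1})$, so that every node has a descendant in $tr(E_k)$. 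One should finally check that no transversal is produced twice, which is exactly what the uniqueness in Lemma~\ref{lem:slide-uniq} and the characterisation in Lemma~\ref{lem:good-selection} provide.
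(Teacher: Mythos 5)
Your proposal is correct and follows essentially the same route as the paper: it verifies the hypotheses of Proposition~\ref{prop:strategy} by splitting the skip-children of each $T\in tr(E_{i-1})$ into non-extra children, extra children avoiding $H_T$ (Proposition~\ref{prop:enum-skip1}), and $H$-children enumerated via the (slide-parent)--(slide-child) relation of Section~\ref{sec:5}, exactly as in the paper's Section~\ref{sec:6}. Your closing remarks on acyclicity and irredundancy of the combined relation are a faithful (and slightly more explicit) account of what the paper leaves implicit.
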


% \begin{algorithm2e}[h!]
% \SetVline
% \caption[Algorithm ]{$Enum((T,i))$}
% \SetKw{aoutput}{output}
% \SetKw{A}{A}
% \SetKw{Sol}{Sol}
%  \label{algo:EnumerationTree}
% %\AlgData
%  \KwIn{A graph $G$, a maximal matching $M:=\{e_1,\ldots,e_k\}$ and a node $(T,i)$}
% %\AlgResult
% \KwOut{Leaves that are descendants of $(T,i)$}
% \Begin{
%   \eIf{$i=k$}{
%     \aoutput $T$
%   }
%   {
%     \ForEach{ $T'$ $\in C((T,i))\setminus (C_5((T,i))\cup C_6((T,i)))$}{
%       $Enum((T',i+1))$\;

%     }
%     \ForEach{ $T'$ $\in C_5((T,i))\cup C_6((T,i))$}{
%       $Enum((T',i))$\;

%     }
%   }
% }
% \end{algorithm2e}

\section{Conclusion}\label{sec:7}

In this paper, we propose a polynomial delay polynomial space algorithm for listing all minimal edge dominating sets in a given graph.  This improves drastically the previously known algorithms which
were incremental output-polynomial and use exponential space. We state furthermore that usual approaches with Berge's algorithm involves an NP-complete problem, and thus it is difficult with usual
approaches of Berge's algorithm to produce an efficient algorithm.  To cope with this difficulty, we introduce a new idea of ``changing the traversal routes in the area of difficult solutions'' (the
notion of skip-children and the removal of edges involved in $H$-patterns).  Based on this idea, we give a new traversal route on these difficult solutions, that is totally independent from Berge's
traversal route (the (slide-parent)-(slide-child) relation).  As a result, we are able to construct a polynomial delay polynomial space algorithm. 

The idea of changing the traversal routes seems to be new and to be able to apply to many other kind of algorithms in enumeration area.  Interesting future works are applications of this idea to other kind
of enumeration algorithms, \eg the one used by Lawler et al. for enumerating maximal subsets \cite{lawler80} or other algorithms for enumerating minimal transversals (see for instance \cite{EGM03}).

\bibliographystyle{plain}
\bibliography{bib}

\end{document}